\documentclass[12pt,reqno]{amsart}
\usepackage{fullpage}
\usepackage{amsfonts}
\usepackage{amsmath}
\numberwithin{equation}{section}
\usepackage{amsthm}
\usepackage{mathtools}
\usepackage{scalerel}
\usepackage{amssymb}

\usepackage{times}
\usepackage{graphicx}
\usepackage{subcaption}
\usepackage[export]{adjustbox}
 \usepackage{hyperref}
\usepackage{epsfig}
\usepackage{xcolor}
\usepackage{multirow}
\usepackage{enumerate}
\usepackage{booktabs}

\newtheorem{theorem}{Theorem}
\numberwithin{theorem}{section}

\newtheorem{lemma}[theorem]{Lemma}

\newtheorem{corollary}[theorem]{Corollary}

{ \theoremstyle{remark}\newtheorem{remark}[theorem]{Remark} }

\def\R{\mathbb{R}}
\def\C{\mathbb{C}}

\newcommand{\tr}{\operatorname{Tr}}

\begin{document}

\title[Article]{Gradual Eigenvector Ergodization in Coupled Ginibre matrices}

\author{ Margherita Disertori$^{1}$	and  Yan V Fyodorov$^{2}$	}
\address{$^1$ Institute for Applied Mathematics $\&$ Hausdorff Center for Mathematics, University of Bonn,Germany. Email: mdiserto@uni-bonn.de}
\address{$^2$ Department of Mathematics, King's College London, WC2R 2LS, UK. Email:yan.fyodorov@kcl.ac.uk}


\maketitle

\begin{abstract}
Non-Hermitian random matrices provide a useful framework for understanding universal characteristics of dissipative quantum chaotic systems with
loss or gain. We consider a model of two such system represented by two independent $N\times N$ complex
Ginibre matrices interacting via a deterministic matrix $c{\bf 1}_N$,
where
 $c$ is the complex coupling parameter whose magnitude $|c|$ controls the interaction strength.
We characterize quantitatively how the eigenvectors of the whole system, initially localized in one of the individual
subsystems for $|c|=0$,  eventually spread over the full system with growing interaction strength.
The resulting asymptotic formula describing such spread in the limit $N\to \infty$ is very explicit and provides a full picture of the
gradual ergodization of eigenvectors as a function of the coupling parameter  $|c|$ in the whole transition regime.
As a by-product of our method we also compute the mean eigenvalue density for our model at the origin of the spectral bulk $z=0$
in the fully ergodic regime, when the coupling is scaled with the matrix size as $c=\sqrt{N}\tilde{c}$.
We find that as $N\to \infty$ the limiting density at the origin vanishes beyond the critical value $|\tilde{c}|=1$,
signalling of a split of the density support in the complex plane into two disjoint domains.

\end{abstract}

\section{Introduction}
In this paper we study the structure of the (right) eigenvectors ${\bf v}$  of an ensemble of $2N\times 2N$ random matrices $X$ composed of
four $N\times N$ blocks as shown below:
\begin{equation}\label{ensemble_def}
X=\left(\begin{array}{cc} G_{1} & c\,{\bf 1}_N\\ \overline{c}\,{\bf 1}_N & G_{2}
\end{array}\right), \quad c\in \mathbb{C},  \quad {\bf v}=\left(\begin{array}{c}{\bf v}_1\\{\bf v}_2\end{array}\right),
\end{equation}
where  ${\bf 1}_N$ is the identity matrix, and $G_{j}\in  \mathbf{GinUE}_{j}(N), j=1,2$ are
two independent copies of the complex Ginibre matrices $G$ of size $N\times N$, i.e. square matrices
$G\in \C^{N\times N}$ with independent
identically distributed matrix elements 
$G_{j,k}\sim\mathcal{N}_{\C}(0,1).$

Such random matrices $X$ are generically non-Hermitian and non-normal, but diagonalizable with probability one, and it is natural to represent
their right eigenvectors in the column form as in Eq.(\ref{ensemble_def}), with ${\bf v}_{1,2}\in \mathbb{C}^N$ being two $N-$component
complex vectors. The $2N$ component eigenvector ${\bf v}$ will be assumed to be
normalized to the unit length: ${\bf v}^*{\bf v}={\bf v}_1^*{\bf v}_1+{\bf v}_2^*{\bf v}_2=1$, where we denote
$\mathbf{v}_{\sigma}^*= (\overline{v_{\sigma,1}}\ldots, \overline{v_{\sigma,N}}), \sigma=1,2$ for the Hermitian conjugate,
with bar standing for the complex conjugation.

As long as the parameter $|c|^2$ (which we will frequently call below the {\it coupling}) is zero, the spectrum of $X$ consists of a superposition
of two independent sets of $N$ eigenvalues for $G_1$ and $G_2$ respectively, with eigenvectors ${\bf v}$ being such that either ${\bf v}_2$
or ${\bf v}_1$ are identically zero.
For $|c|^2>0$  it is easy to show that for each eigenvector ${\bf v}$ the two (squared) norms 
\begin{equation}\label{eq:pdef}
p_1:={\bf v}_1^*{\bf v}_1,\qquad p_2:={\bf v}_2^*{\bf v}_2
\end{equation}
must be simultaneously nonvanishing, inducing nontrivial correlations between the eigenvalues of the two sets.
One may however naturally expect that typically the two norms will remain noticeably different for small enough values of the coupling,
reflecting the "nonergodic"
nature of the eigenvectors inhomogeneously spreading accross the full basis of size $2N$.
Only for strong enough coupling the "ergodization" of eigenvectors is expected to be complete.
Although qualitatively these features should exist for any finite $N\ge 2$,  one is usually mostly interested
in the large size limit $N\to \infty$, as only in this limit one may expect a certain universality of the behaviour,
i.e. insensitivity to the special detail of the distribution of the entries of the underlying two random matrices,
and, possibly, to the particular form of the off-diagonal coupling in the model.
One may also naturally expect that when 
complete eigenvector ergodization occurs in the limit $N\to \infty,$ 
the correlation properties of the eigenvalues of $X$ will be indistinguishable from those of a single Ginibre matrix of size $2N$. 

Investigating this type of problem seems not only interesting and challenging mathematically, but is also closely related to questions
actively discussed in physics literature. As is well-known, Hermitian random matrices represent paradigmatic models for studying
spectral and eigenfunction characteristics of highly excited states of isolated quantum chaotic systems, both in single-particle\cite{Haake4th}
and more recently in manybody \cite{Prozen_etal_2018} setting. As was conjectured already long ago \cite{Deutsch,Srednicki} eigenstates
in quantum chaotic systems are expected to satisfy  a fundamental property of the Eigenstate Thermalization Hypothesis (ETH),
asserting that the highly excited eigenfunctions of a sufficiently disordered or quantum chaotic system are uniformly distributed
in the phase space. The studies of ETH-related questions keep attracting considerable attention, with many new interesting insights obtained
in recent years, see e.g. \cite{Foini_Kurchan19,Pappalardi_etal2022}. In mathematics,  ergodicity of wavefunctions in single-particle quantum
chaotic systems goes back to Shnirelman’s theorem \cite{Shnirel74} and is a subject of long history of studies, see e.g. \cite{Rudnick94,Zelditch87}.
In that context  ETH is usually called the Quantum Unique Ergodicity. Due to the relation between quantum chaos and random matrix theory (RMT),
there was a considerable interest in studying ergodicity-related  questions in the context of RMT, see e.g.
\cite{Cipol21,Benigni22,ErdosRiabov24}. In that framework, it was suggested long ago by Grobe, Haake and Sommers (GHS) \cite{grobe1988quantum}
that non-Hermitian matrices from the complex Ginibre ensemble are expected to represent a faithful model for quantum chaotic systems where
mechanisms of gain and/or loss due to dissipation are operative, see \cite{Haake4th} for a more detailed discussion.
 Despite the fact that precise conditions of applicability and universality of GHS conjecture remain 
a matter of controversy and active discussion \cite{villasenor2024breakdown,villasenor2025correspondence,mondal2026transient},
it attracted renewed attention to studies of spectral characteristics of non-Hermitian matrices.
The general interest in non-Hermitian quantum chaos, especially in manybody setting,  and in the associated random matrix description essentially
intensified in recent years, see e.g. \cite{ProsenDissip,Prasad_etal_2022,Garcia22,Garcia23,Cipolloni-KF1,Cipolloni-KF2}.
Correspondingly, the model of two coupled Ginibre matrices is expected to describe a pair of interacting non-Hermitian/dissipative quantum chaotic
systems.

In the present paper we consider the  joint probability density (JPD) of the variables
$p_1,p_2\in [0,1]^{2}$ at the spectral point $z\in \C,$ defined by
\begin{equation}\label{def:JPD-finiteN}
\pi_{N}(z,p_1,p_2):=\frac{\mathbb{E}[\Pi_N(z,p_1,p_2)]}{\mathbb{E}[\rho_N(z)]},
\end{equation}
where $\mathbb{E}$ denotes the average  over the ensembles of ${\small \mathbf{GinUE}_1(N)}$ and  ${\small \mathbf{GinUE}_2(N)},$
\begin{equation}\label{def:rho}
\rho_N(z):=  \sum_{a=1}^{2N}\delta^{(2)}(z-z_a)
\end{equation}
is the unnormalized empirical density of the complex eigenvalues $z_a, \, a=1,\ldots, 2N$ for the matrix $X,$ and
\begin{equation}\label{def:Pi}
\Pi_N(z,p_1,p_2)=:\sum_{a=1}^{2N}\delta^{(2)}(z-z_a)\,
\delta\big (p_1- {\bf v}_{1} (a)^* {\bf v}_{ 1} (a)\big )\, \delta\big(p_2-{\bf v}_{2} (a)^*{\bf v}_{2} (a)\big),
\end{equation}
is the unnormalized joint empirical density of the squared norms $p_1$ and $p_2$ corresponding to those
eigenvectors ${\bf v} (a)$ of $X$
whose eigenvalues $z_a$ are in the vicinity of a point $z\in \mathbb{C}.$ Here we used the convention $\delta^{(2)}(z):=\delta(\Im z)\delta(\Re z)$,
with $\delta(x)$ being the standard Dirac delta-function and we assumed
$dzd\overline{z}:=d(\Im z)d(\Re z)$.

Using the recently proposed approach \cite{YF2025} to the joint empirical density of an eigenvalue $z$ and the corresponding right eigenvector
$\mathbf{v}$ we derive a representation for the  normalized joint probability density at spectral point $z=0,$  and finite $N$
(see  Corollary \ref{Cor2.2} below) which is well suited to study the
$N\to \infty$ asymptotics. Our main result is an explicit formula for the limiting JPD at the spectral point $z=0.$
\begin{theorem}\label{main_asy}
For any fixed, $N-$independent, value of the coupling $0<|c|<\infty$ normalized limiting JPD at the spectral point $z=0$ is given by
\begin{align}\label{main_asy_explicit}
\pi_{\infty}(z=0,p_1,p_2)&: =  \lim_{N\to \infty}\pi_{N}(z,p_1,p_2)\\
&=
\frac{|c|^2}{p_1p_2}\,e^{-|c|^2\left(\frac{p_1}{p_2}+\frac{p_2}{p_1}\right)}\delta(p_1+p_2-1)\left[I_1(2|c|^2)+\frac{1}{2}\left(\frac{p_1}{p_2}+\frac{p_2}{p_1}\right)
I_0(2|c|^2)\right],\nonumber
\end{align}
where $I_n(x)$ is the modified Bessel function of the first kind 
\begin{equation} \label{Bessel_I}
 I_n(x)=\int_0^{2\pi}e^{x\cos{\theta}}\cos{n\theta}\,\frac{d\theta}{2\pi},\quad n\in \mathbb{N}_{0}.
 \end{equation}
In addition, the asymptotic mean eigenvalue density at $z=0$ is given by
\begin{equation}\label{main-asy-ED}
p_{\infty} (0):=\lim_{N\to \infty}\mathbb{E}[\rho_N(0)]= \frac{2}{\pi }.
\end{equation}
\end{theorem}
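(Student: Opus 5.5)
The plan starts from the finite-$N$ representation of Corollary \ref{Cor2.2}. It is obtained from the approach of \cite{YF2025} and expresses $\mathbb{E}[\Pi_N(0,p_1,p_2)]$ and $\mathbb{E}[\rho_N(0)]$ through low-dimensional integrals, over a few bosonic/fermionic collective variables, of $N$-th powers of explicit functions. The idea of that approach is as follows. At an eigenvalue $z$ with right eigenvector $\mathbf{v}$, one writes $\delta^{(2)}(z-z_a)$ together with the eigenvector constraints as an integral over $\mathbf{v}$ on the unit sphere, weighted by $|\det(X-z)'|^2$ restricted to the orthogonal complement. Averaging over the independent Ginibre blocks $G_1,G_2$ is Gaussian, since each block acts only on its own half $\mathbf{v}_\sigma$. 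The resulting determinant of the $(2N-1)$-dimensional complement is then handled by a Hubbard--Stratonovich / superbosonization step.

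The key structural observation is that, because the coupling is $c\mathbf{1}_N$, the remaining matrix in the complement is of the form $\begin{pmatrix} G_1' & c\\ \bar c & G_2'\end{pmatrix}$ with Gaussian blocks. Consequently, the integrand depends on $N$ only through expressions like $(1+\ldots)^N$ or $e^{N f}$, multiplied by $N$-independent prefactors involving $p_1,p_2$ and $|c|^2$.

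With $|c|$ fixed, I then take $N\to\infty$ by a saddle-point/Laplace analysis of the collective-variable integral at $z=0$. At $z=0$ the saddle manifold should be degenerate, with a residual angular (compact) direction, coming from the $U(1)$ phase relating the two blocks' contributions. After rescaling the radial fluctuations by $N^{-1/2}$, the Gaussian integrals produce the factors $\frac{|c|^2}{p_1p_2}e^{-|c|^2(p_1/p_2+p_2/p_1)}$. These arise because the coupling term $c\,\mathbf{v}_2$ feeding into block 1 contributes a quadratic form in $|c|^2/p_1$ and $|c|^2/p_2$, and the normalization $\mathbf{v}^*\mathbf{v}=1$ gives $\delta(p_1+p_2-1)$. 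The surviving compact angular integral is of the form $\int_0^{2\pi} e^{2|c|^2\cos\theta}(a+b\cos\theta)\frac{d\theta}{2\pi}$, which yields precisely $I_1(2|c|^2)+\frac12(p_1/p_2+p_2/p_1)I_0(2|c|^2)$ through the representation \eqref{Bessel_I}.

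For the density, the same analysis applied to the representation of $\mathbb{E}[\rho_N(0)]$, without $p$-constraints, should give $\frac{2}{\pi}$, i.e. twice the single-Ginibre value $\frac1\pi$ (local density is unaffected by $O(1)$ coupling). As a consistency check, I would integrate the unnormalized limit of $\mathbb{E}[\Pi_N]$ over $p_1,p_2$ and verify that it reproduces $\frac{2}{\pi}$. Equivalently, I would check that the right-hand side of \eqref{main_asy_explicit} integrates to $1$. With $t=p_1/p_2$ this reduces to a Bessel identity, $\int_0^\infty \frac{dt}{t}e^{-|c|^2(t+1/t)}\ldots$, evaluated via $K$-type integrals, which should collapse to $1$.

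The main obstacle I expect is the asymptotic step. One must correctly identify the degenerate saddle manifold at $z=0$, keep exactly the $O(1)$ fluctuation directions that survive (the angular mode and the $p$-dependent Gaussian directions), and justify uniform control away from the saddle so that the error terms vanish and the limit can be exchanged with the distributional $p$-dependence. I would first try to perform the Gaussian radial integrals exactly at finite $N$, leaving a single angular integral and a one-dimensional integral with an $N$-th power. Then I would apply dominated convergence using a bound of the form $|(1+x/N)^N|\le e^{\Re x}$.
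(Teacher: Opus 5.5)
Your overall route is the paper's: start from Corollary~\ref{Cor2.2}, rescale $q_j=\sqrt{NR_j}\,e^{i\theta_j}$, apply Laplace's method at $R_1=R_2=1$, keep the relative phase $\theta=\theta_1-\theta_2$ through the factor $e^{2|c|^2\cos\theta}$, and obtain the density by integrating over $p_1$. However, your account of where the $p$-dependence comes from is wrong, and it misidentifies what has to be controlled. The factor $\frac{1}{p_1p_2}e^{-|c|^2(p_1/p_2+p_2/p_1)}$ is an \emph{exact} finite-$N$ prefactor in Corollary~\ref{Cor2.2}. It comes from the $\mathbf{k}$-integration and the sphere integrals $\mathcal{K}_0,\mathcal{K}_1$, not from Gaussian fluctuation integrals. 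The remaining $q$-integral depends on $p$ only linearly, through the coefficients $1$, $p_1/p_2$, $p_2/p_1$ in $\mathcal{U}$. Hence everything reduces to three $p$-independent integrals $\mathcal{I}_{0,N},\mathcal{I}_{1,N},\mathcal{I}_{2,N}$. There are no ``$p$-dependent Gaussian directions''. The radial Gaussian integral just gives $1$, with $\mathcal{N}_N\to1/\pi$ by Stirling. The factor $|c|^2$ comes from $\mathcal{U}$ at the saddle ($\mathcal{U}_{0,N}\to2|c|^2\cos\theta$, $\mathcal{U}_{1,N},\mathcal{U}_{2,N}\to|c|^2$). Exchanging $N\to\infty$ with the $p$-integration is therefore trivial.

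The genuine gap is in the step you yourself flag as the main obstacle. The bound $|(1+x/N)^N|\le e^{\Re x}$ is false for complex $x$ (take $x=iy$). More importantly, no bound exponential in $x$ can give a dominating function here. Since $x=|c|^2e^{i\theta}/\sqrt{R_1R_2}$, for fixed $N$ the function $(R_1R_2)^{N-2}e^{C/\sqrt{R_1R_2}}$ is not even integrable near the axes. In fact the approximation $|1+x/N|^{2(N-2)}\approx e^{2\Re x}$ itself breaks down once $R_1R_2\lesssim N^{-1}$. The paper instead keeps the weight and the power together, $(R_1R_2)^{N-2}|1+x/N|^{2(N-2)}\le(\sqrt{R_1R_2}+|c|^2/N)^{2(N-2)}$. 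A separate minimization in the regions $D_1,D_2,D_3$ then shows that the contribution outside $(1-\delta,1+\delta)^2$, with $\delta=N^{-\alpha}$ and $\frac13<\alpha<\frac12$, is exponentially small in $N\delta^2$ (Lemmas~\ref{Le:A1} and \ref{Le:A2}). You could also bypass Laplace entirely. Expand both $(N-2)$-th powers binomially; the phase integration forces equal exponents and each Gaussian moment is a factorial. For example, $\frac{1}{(N-2)!^2}\int e^{-|q_1|^2-|q_2|^2}|q_1\overline{q}_2+|c|^2|^{2(N-2)}\frac{dq_1d\overline{q}_1dq_2d\overline{q}_2}{\pi^2}=\sum_{m=0}^{N-2}\frac{|c|^{4m}}{m!^2}\to I_0(2|c|^2)$, and dominated convergence for series justifies the limit. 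Finally, for the density, ``local density is unaffected by $O(1)$ coupling'' is only a heuristic. Your ``consistency check'' is actually the proof. What makes it collapse to $1$ is $\mathcal{J}_0=2K_0(2|c|^2)$ and $\mathcal{J}_1=-\frac12\mathcal{J}_0'=2K_1(2|c|^2)$, together with the Wronskian identity $I_1(x)K_0(x)+I_0(x)K_1(x)=1/x$ at $x=2|c|^2$, which you need to invoke explicitly.
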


\begin{remark}
The restriction of the above expression to the spectral point at the origin $z=0$ is purely technical and is chosen for the sake of simplicity
and transparency of the derivation. Essentially the same formula is expected to be valid for any point $z$ in the bulk of the unperturbed Ginibre
spectrum, i.e. as long as $\lim_{N\to \infty} |z|/\sqrt{N}<1$.
It would be also natural to conjecture the result remains valid when the  complex gaussian Ginibre ensemble $GinUE$ is replaced by a
more general non-Hermitian ensemble with independent, identically distributed entries and fourth moment matching conditions like in
\cite{TaoVu2008,TaoVu2015}, and/or  the nature of the coupling is changed.
Namely, denoting $\hat{C}$ the off-diagonal $N\times N$ block of the matrix in $X$, one may expect that the
result will remain true replacing the coupling $|c|$ with   $N^{-1/2}\|\hat{C}\|_{HS}$, where $\|\hat{C}\|_{HS}=\sqrt{\sum_{ij}|c_{ij}|^2}$ is the
Hilbert-Schmidt
norm of the coupling block. The background for such a conjecture should be clear from a discussion after the Remark 1.3.
\end{remark}
\begin{remark}
In the limit of  vanishing coupling strength  $|c|^2\to 0$ we have, using  $I_0(2|c|^2)\approx 1$ and $I_1(2|c|^2)\approx |c|^{2},$
\begin{align*}
\lim_{|c|\to 0} \pi_{\infty}(z=0,p_1,p_2)&= \lim_{|c|\to 0}  \frac{|c|^2}{p_1p_2}\,e^{-|c|^2\left(\frac{p_1}{p_2}+\frac{p_2}{p_1}\right)}
\delta(p_1+p_2-1)\frac{1}{2}\left(\frac{p_1}{p_2}+\frac{p_2}{p_1}\right)\\
&= \frac{1}{2}\left[\delta(p_1-1)\delta(p_2)+\delta(p_2)\delta(p_2-1)\right]
\end{align*}
in distribution,  which describes the complete localization of eigenvectors in either of two subsystems, as expected.
On the other hand, in the limit of strong coupling  $|c|^2\gg 1$ one may use the asymptotic $I_0(2|c|^2)=I_1(2|c|^2)\approx \frac{e^{2|c|^2}}{2|c|\sqrt{\pi}}$ and Laplace method to obtain
\[
 \pi_{\infty}(z=0,p_1,p_2)\approx \frac{2|c|}{\sqrt{\pi }} e^{-16|c|^{2} \left(p_{1}-\frac{1}{2} \right)^{2}} \delta (p_{2}- (1-p_{1})),
\]
so that the JPD becomes essentially a Gaussian distribution with variance proportional to $|c|^{-2}\ll 1$,
sharply peaked around the common mean values $p_1=p_2=1/2$.
This result describes complete ergodization of each eigenvector  over the full basis of $2N$ sites.

One may conveniently visualize the gradual ergodization process happening between the above two limiting cases by recasting
Eq.(\ref{main_asy_explicit}) in the form of the probability density $P_{\infty}(p)$ for one of the two norms, say $p=p_1$, namely:
\begin{equation}\label{main_asy_ratio}
P_{\infty}(p)=\frac{|c|^2}{p(1-p)}\,e^{-|c|^2\left(\frac{p}{1-p}+\frac{1-p}{p}\right)}\left[I_1(2|c|^2)+\frac{1}{2}\left(\frac{p}{1-p}+\frac{1-p}{p}\right)
I_0(2|c|^2)\right], \quad p\in[0,1],
\end{equation}
which we plot for several values of the coupling $|c|^2$ in Figure~\ref{fig1}. 
\begin{figure}
 \includegraphics[width=80mm]{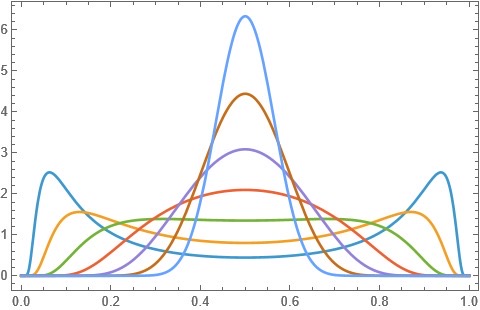}
 \caption{The density $P_{\infty}(p)$ plotted for the values of coupling parameter, from left to right, $|c|^2=0.125$ (blue), $|c|^2=0.25$ (yellow), $|c|^2= 0.5$ (green), $|c|^2=1$ (red), $|c|^2=2$ (navy blue), $|c|^2=4$ (brown) and $|c|^2=8$ (dark blue).}
 \label{fig1}
\end{figure}
For any $0<\kappa=2|c|^2<\infty$ the above density is smooth $P_{\infty}\in C^{\infty}((0,1),(0,\infty)),$
non-constant, and symmetric with respect to the middle point  $p=\frac{1}{2}$
\[
P_{\infty}\left ( \tfrac{1}{2}+x  \right) =\tfrac{2\kappa }{(1-4x^{2})}\, e^{-\kappa \frac{1+4x^{2}}{1-4x^{2} }}
\left( I_1(\kappa)+  \tfrac{1+4x^{2}}{1-4x^{2}}  I_0(\kappa) \right).
\]
Therefore it admits either a local minimum or local maximum in $p=\frac{1}{2}$.
Expanding around this point in the Taylor series gives to the leading order
\[
P_{\infty}\left ( \tfrac{1}{2}+x  \right)=2\kappa\,e^{-\kappa}(I_0(\kappa)+I_1(\kappa))\left[1+4x^2
\left(\tfrac{1}{2}-\kappa+\frac{I_0(\kappa)}{I_0(\kappa)+I_1(\kappa)}\right)+O\left(x^4\right)\right].
\] 
Hence,  for $|c|^2<|c|^2_{th}\approx 0.583$ (found as the root of the equation $\frac{I_0(2|c|^2)}{I_0(2|c|^2)+I_1(2|c|^2)}=2|c|^2-\frac{1}{2}$)
the midpoint $p=1/2$ is a local minimum, while for $|c|^2>|c|^2_{th}$ it is a local maximum.
Since $\lim_{p\to 0,1}P_{\infty}(p)=0,$ we conclude that the density  $P_{\infty}(p)$ must show (at least) two separate maxima
symmetric with respect to the midpoint for $|c|^2<|c|^2_{th}$, 
which can be interpreted as the two subsystems still  retaining their identity. For  $|c|^2>|c|^2_{th}$  the two maxima merge into a single maximum of the probability density located at the midpoint, signalling the loss of the subsystems identity.
From this viewpoint the value $|c|^2_{th}$ is natural to be called the {\it subsystem's identity threshold}.
 One may hope that such a loss-of-identity ergodization transition may be observed
in numerical simulations of models of manybody chaotic non-Hermitian systems, e.g a pair of coupled
non-Hermitian Sachdev-Ye-Kitaev models, see \cite{Garcia22,Garcia23}.
\end{remark}

As a by-product of our derivation (see \eqref{main-asy-ED}) we find that everywhere in the transition region the asymptotic
mean eigenvalue density  is independent of the coupling $|c|$ and equal to the twice mean limiting density of a single Ginibre
ensemble : $p_{\infty}(0)=\frac{2}{\pi}$. Such density is expected to retain asymptotically the same value everywhere in the disk
$|z|<\sqrt{N}$, so that multiplying it with the area of the disk gives $2N$, the total number of eigenvalues.
 
 We thus see that a change in the mean eigenvalue density is only possible in the fully ergodic regime when the coupling grows
 with size  $N$ as $|c|=\sqrt{N}|\tilde{c}$, with $|\tilde{c}|$ of the order of unity. In the latter case we find that
 $p_{\infty}(0)=\frac{2}{\pi}(1-|\tilde{c}|^2)$ as long as $|\tilde{c}|<1$ and zero otherwise. The corresponding transition at $
 |\tilde{c}|=1$ signals of the support for the mean eigenvalue density for our ensemble splitting into two disjoint domains
 for $|\tilde{c}|>1$.

To put our results in the appropriate context of related studies, one should mention that the questions of eigenfunction
ergodization in coupled non-dissipative quantum chaotic systems attracted already some interest both in physics and mathematics
literature from various perspectives. In particular, an impressive recent paper \cite{Stone_etal_2025} studied in detail the
model of a finite number $D$ of coupled Hermitian random matrices arranged on a circle (a natural generalization of the structure
displayed in Eq.(\ref{ensemble_def}), our case being $D=2$)
  and addressed the ergodization of eigenvectors with increased coupling between nearest neighbours. Using the random Hermitian
  Wigner matrices $V$ in place of our GinUE, 
   the authors introduced a control parameter $\Gamma$ which in our notations reads
   $\Gamma=\left(\frac{N^{1/2}\|\hat{C}\|_{HS}}{\|\hat{V}\|_{HS}}\right)^2$  and is the analog of the parameter $|c|^{2}.$
They then demonstrate that if $\Gamma$ grows as $N^{\epsilon}, \epsilon>0$ the eigenvectors are fully ergodized, so that in our $D=2$ case
$p_1=p_2=1/2(1-o(1))$.  At the same time, if $\Gamma$ behaves as $  N^{-\epsilon}, \epsilon>0$ then $p_1=1-o(1), p_2=o(1)$ or vice versa,
signalling of the total localization of each eigenfunction inside one of the subsystems. In this picture a nontrivial partial ergodization
described by $p_1$ and $p_2$ of the same order but significantly statistically
different from $1/2$ may occur only if $\Gamma=O(1)$ as $N\to \infty$. Replacing in our case $V$ with Ginibre matrix whose Hilbert-Schmidt norm
is $\|\hat{G}\|_{HS}=\sqrt{\sum_{ij}1}=N$ while for the diagonal perturbation $ \|\hat{C}\|_{HS}=|c|\sqrt{N}$ we see that gradual ergodization
transition should happen when  $\Gamma=|c|^2=O(1)$ confirming the range chosen in our paper.

  Describing a situation within such a "transition to global chaos" regime characterized by an incomplete/partial spread
of the eigenvectors over the full basis seems however to go beyond the remit of the methods exploited in
\cite{Stone_etal_2025}. One may expect that describing that situation in detail should require exploiting a
non-perturbative approach, with the supersymmetric nonlinear $\sigma$-model description due to Efetov \cite{Efebook}
looking as a natural candidate.
In fact, exactly the same Hermitian, GUE-based model for $D=2$ case was treated by the standard supersymmetry
approach at physical level of rigour long ago \cite{Efe2000}. It yielded an explicit, albeit involved, expression in the
transition regime for the joint probability density of two variables closely related to $p_1$ and $p_2$; in our
terminology, the JPD of variables $\tilde{p}_{1k}=N|{\bf v}_{1,k} |^2$ and $\tilde{p}_{2l}=N|{\bf v}_{1,l}|^2$,
where ${\bf v}_{1,k}$ and ${\bf v}_{2,l}$ are two arbitrary components of the vectors ${\bf v}_1$ and ${\bf v}_2$,
correspondingly. Such JPD is  independent of the choice of indices $1\le (k,l)\le N,$ since inside each half of
the system there must be always full ergodization for any coupling.  Following the essential recent progress in rigorous
version of the Efetov supersymmetry approach, see \cite{MSchTSch23}, one may expect that  the corresponding derivation could be
in principle made mathematically rigorous, though at the present moment it has not been yet verified. 

Coming back to the non-Hermitian problems, though we are unaware of any previous studies in this direction, it was natural to expect a qualitatively
similar picture of ergodization to happen as well in that case. Our equations Eqs.(\ref{main_asy_explicit})-(\ref{main_asy_ratio})) confirm this
intuition and, moreover, provide a fully quantitative description of the eigenvector ergodization transition controlled by the parameter $|c|^2$.
A technical advantage of using complex Ginibre rather than Hermitian GUE matrices for diagonal blocks is that in such a case one may readily employ
a mathematically rigorous approach proposed very recently by the second author in \cite{YF2025}.  Such an approach was argued to provide a certain
useful technical alternative to both the so-called Girko approach to non-Hermitian random matrices \cite{Girko1984}  and to the Efetov-type
supersymmetry approach, effectively combining some features of
the two. The essense of the approach suggested in \cite{YF2025} is briefly summarized in the next section, with details of its implementation for
the  present problem given in further sections of the paper and Appendices. We find it highly rewarding that the resulting formulas are very explicit
and we believe elegant, and provide a transparent description of the transition regime and the "loss of identity" phenomenon.
One may also hope that a generalization of the method may eventually be used to describe the highly non-trivial eigenvalue correlations arising in
the transition regime. The corresponding treatment seems technically much more challenging but not impossible. Finally, let us mention that a problem
of eigenfunction ergodization
is intimately related to a more challenging problem of Anderson (de)localization of eigenfunctions. Recently there was an essential progress
\cite{MSchTSch23,YauYin2025, Dubova_etal_2025a,Dubova_etal_2025b,ErdosRiabov25,Drogin2025}  towards
rigorously describing delocalization in the framework of a special class of random matrices - random banded matrices - introduced for that purpose in
\cite{bandCasati etal} and studied at a physical level of rigour using Efetov's nonlinear $\sigma-$ model approach long ago \cite{FM1991,FM1994}.
Non-Hermitian analogues of random banded matrices have been addressed mainly numerically\cite{GhoshKulkarniRoy2023}, while mathematically rigorous
analysis has only started recently\cite{MSchTSch25}. We hope that the methods of the present paper may help to further boost progress in this
subject.\vspace{0,2cm}

\paragraph{\textbf{Organization of this paper.}} In  Section 2 we derive an integral representation for mean  joint empirical density 
$\mathbb{E}[\Pi_N(z,p_1,p_2)]$ at finite $N$ (see  Theorem \ref{Thm2.1} and Corollary \ref{Cor2.2}). 
The outline of the strategy for a general random matrix $X$ is given in Section 2.1.
The implementation of the strategy in the case of the two coupled Ginibre random matrices  Eq.(\ref{ensemble_def}),
is worked out in Section 2.2.
Extracting the leading asymptotics as $N\to \infty$ for a fixed coupling $|c|^2$ and proving Theorem \ref{main_asy}
is relegated to Section 3.1. Finally, the computation in the regime of $|c|\sim \sqrt{N}$ is presented in the Section 3.2.
Some technical results, needed in the proof of Theorem \ref{main_asy}, are presented in the Appendix.
\vspace{0,2cm}

\paragraph{\textbf{ Acknowledgements}}
The research at the University of Bonn was supported by the German Research Foundation (DFG) under Germany's Excellence Strategy-GZ 2047/1 - 390685813, and by the  Deutsche Forschungsgemeinschaft (DFG, German Research Foundation) – CRC 1720 – 539309657.
The research at King's College London was supported by  EPSRC grant {\bf UKRI1015} "Non-Hermitian random matrices: theory and applications". YVF is most grateful to Noemi Cuppone for her assistance with using Mathematica software which was instrumental at intermediate stages of the project and to Dmitry Savin, Mira Shamis and Mariya Shcherbina for discussions and interest in the paper. YVF acknowledges with thanks the kind hospitality extended to him by the
Institute for Applied Mathematics at the University of Bonn during inception and completion of the project,
and is especially grateful to Prof. Dr. Anton Bovier
and Prof. Dr. Patrik Ferrari for allocating funds which made his research stay possible.

\section{Integral representation for the averaged JPD at finite $N$}

\subsection{Strategy for a general random matrix}
Consider a complex $n\times n$   random matrix  $X\in \C^{n\times n}$ with probability measure $\mathbb{P}.$
We denote by $\left \langle \cdot \right\rangle_X$ the corresponding average and assume all moments of the measure
are finite. This is for example true if the random elements of $X$ are independent and normal distributed,
as in the case considered in this paper.

We will use the following formula for the joint empirical density of an eigenvalue $z$ and the corresponding right eigenvector ${\bf v},$
which was proved in  \cite{YF2025}.
\begin{theorem} \cite[Theorem 1.1]{YF2025} \label{MainMetaTheorem}
Consider a square matrix  $X\in \C^{n\times n}$
with $n$ distinct eigenvalues $z_a,\,  a=1, \ldots, n$ and associated right eigenvectors
${\bf v} (a)\in \C^{n}.$ Each eigenvector ${\bf v} (a)$ is  normalized  by  ${\bf v} (a)^{*}{\bf v} (a) =1$ and its
first entry ${\bf v} (a)_{1}$ has  the fixed ($a-$ independent) phase $\theta_a=\theta,$ with $\theta$  uniformly distributed on
$[0,2\pi).$
Let  $d\mu_H({\bf v})$ be the associated invariant Haar's measure on the unit complex sphere $\mathbf{S}^c_{n}$,
which can be explicitly defined via
$d\mu_H({\bf v})=C_{n}\delta({\bf v}^*{\bf v}-1)d{\bf v}d{\bf v}^*$, where
$C_{n}=\frac{(n-1)!}{\pi^{n}}>0$ is the normalization constant 
$\int_{\mathbf{S}^c_{n}}d\mu_H({\bf v})=1$ and
$d{\bf v}d{\bf v}^*:=\prod_{j=1}^{n}d{v}_jd\overline{v}_j$.

Then the (unnormalized) joint empirical density of an eigenvalue $z$ and the corresponding right eigenvector ${\bf v}$  with respect to the measure
 $dzd\overline{z}\,d\mu_H({\bf v})/C_n$, further
  averaged over the phase $\theta$ uniformly distributed in $[0,2\pi)$
  is given by
\begin{equation}\label{Meta_Kac_Rice_right_eigenvec}
\mathcal{P}_{n}(z,{\bf v})=
\frac{1}{\pi}\delta^{(2n)}\left(\left(X-z{\bf 1}\right){\bf v}\right)\,\left|\frac{d}{dz}\det\left(X-z{\bf 1}\right)\right|^2.
\end{equation}
\end{theorem}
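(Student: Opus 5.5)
The plan is a Kac--Rice type change of variables performed in the product space $\C\times\mathbf{S}^c_n$. Fix a generic matrix $X$ with distinct eigenvalues $z_a$ and consider the map $F:(z,{\bf v})\mapsto (X-z{\bf 1}){\bf v}\in\C^n$, restricted to $\C\times\mathbf{S}^c_n$. Since the sphere has real dimension $2n-1$, a phase direction is left over. We remove it by fixing the phase of the first entry ${\bf v}_1=r e^{i\theta}$ and integrating over $\theta$; this gives a manifold of real dimension $2+2n-2=2n$, matching the target $\C^n\cong\R^{2n}$.

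The zeros of $F$ on this reduced manifold are exactly the pairs $(z_a,{\bf v}(a))$, each with fixed phase $\theta$. For any test function $f$, the coarea/delta-function identity then gives
\begin{equation*}
\sum_a f(z_a,{\bf v}(a))=\int f(z,{\bf v})\,\delta^{(2n)}\big(F(z,{\bf v})\big)\,|\det DF(z,{\bf v})|\,dz\,d\overline{z}\,d\sigma({\bf v}),
\end{equation*}
where $d\sigma$ is the induced measure on the phase-fixed sphere. The key step is to compute this Jacobian at a zero. Choose a unitary $U$ with first column ${\bf v}(a)$, and write $U^*(X-z_a){\bf 1}U=\begin{pmatrix}0&{\bf w}^*\\ 0&B\end{pmatrix}$ with $B=U_\perp^*(X-z_a)U_\perp$ an $(n-1)\times(n-1)$ block.

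Tangent directions to the phase-fixed sphere are $\delta{\bf v}=U_\perp\,\delta{\bf u}$ with $\delta{\bf u}\in\C^{n-1}$, to leading order, plus a real normal direction that the constraints kill. We then have $DF(\delta z,\delta{\bf u})=-\delta z\,{\bf v}(a)+(X-z_a)U_\perp\delta{\bf u}$. In the basis $U$, the component along ${\bf v}(a)$ is $-\delta z+{\bf w}^*\delta{\bf u}$, and the orthogonal component is $B\,\delta{\bf u}$. This block-triangular form makes the real Jacobian $|\det B|^2$ (complex-linear maps give squared moduli).

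Next we identify this Jacobian: $|\det B|^2=\prod_{b\ne a}|z_b-z_a|^2=\left|\frac{d}{dz}\det(X-z{\bf 1})\right|^2_{z=z_a}$. The first equality holds because $B$ is the compression of $X-z_a$ to the invariant complement structure, i.e.\ the Schur form. Finally, the factor $\frac1\pi$ and the normalisation come from averaging over the uniform phase $\theta\in[0,2\pi)$. Writing $d\mu_H/C_n=\delta({\bf v}^*{\bf v}-1)d{\bf v}d{\bf v}^*$ in polar coordinates for the first component, together with the $\theta$-integration, relates $d\sigma\,d\theta$ to $\delta({\bf v}^*{\bf v}-1)d{\bf v}d{\bf v}^*$ up to the constant $\frac1{\pi}$ after dividing by $2\pi$; the factor $2$ comes from the delta of a squared norm. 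This yields the density \eqref{Meta_Kac_Rice_right_eigenvec}, valid as a distributional identity. Distinctness of the eigenvalues guarantees that all zeros are nondegenerate.

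The main obstacle is the careful bookkeeping of the phase direction. The full map on $\C\times\mathbf{S}^c_n$ has a one-dimensional kernel, generated by ${\bf v}\mapsto e^{i\phi}{\bf v}$, so it is not a submersion. The delta function $\delta^{(2n)}$ must therefore be understood on the phase-fixed slice, and the precise constant must be tracked. I would do this concretely by parametrising ${\bf v}=(re^{i\theta},{\bf v}')$ with $r=\sqrt{1-|{\bf v}'|^2}$, and computing the Jacobian in the coordinates $(z,{\bf v}')$ directly. In these coordinates the real directions $\delta r$ and $i\delta\theta$ separate cleanly, and the constant can be checked on a diagonal $2\times 2$ example.
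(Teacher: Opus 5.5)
The paper gives no proof of this statement; it quotes it from \cite{YF2025}. Your overall route is sound, and the final formula is right. That route is Kac--Rice on $\C\times\Sigma_\theta$ with $\Sigma_\theta=\{\mathbf{v}^*\mathbf{v}=1,\ \arg v_1=\theta\}$, combined with the partial Schur form $U^*XU=\begin{pmatrix} z_a&\mathbf{w}^*\\ 0&\tilde B\end{pmatrix}$, which gives $|\det B|^2=\prod_{b\ne a}|z_b-z_a|^2=\bigl|\frac{d}{dz}\det(X-z\mathbf{1})\bigr|^2_{z=z_a}$, and the constant $\frac1\pi=\frac{2}{2\pi}$.

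\textbf{The gap.} The step you single out as the main obstacle is not handled correctly as written. The tangent space of $\Sigma_\theta$ at $\mathbf{v}(a)$ is not $U_\perp\C^{n-1}$. Moving along $U_\perp\delta\mathbf{u}$ changes $\arg v_1$ at rate $\Im(\mathbf{r}^T\delta\mathbf{u}/v_1)$, where $\mathbf{r}^T$ is the first row of $U_\perp$ and $|\mathbf{r}|^2=1-|v_1|^2$. So $U_\perp\C^{n-1}$ is not contained in $T\Sigma_\theta$ unless $|v_1(a)|=1$. The true tangent vectors are $U_\perp\delta\mathbf{u}+is\,\mathbf{v}(a)$ with $s=-\Im(\mathbf{r}^T\delta\mathbf{u}/v_1)$.

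The $i\mathbf{v}(a)$ component is annihilated by $X-z_a$, not by the constraints. So your formula for the image of $DF$ is correct, but the tilt changes the volume element. If $d\sigma$ is the induced Riemannian measure, as you say, then $d\sigma=|v_1|^{-1}d^{2n-2}\delta\mathbf{u}$, and hence $|\det DF|=|v_1(a)|\,|\det B|^2$. Likewise, the coarea formula with $|\nabla_S\arg v_1|=1/|v_1|$ gives $\delta(\mathbf{v}^*\mathbf{v}-1)\,d\mathbf{v}d\mathbf{v}^*=\frac12|v_1|\,d\sigma\,d\theta$, which is not a constant multiple. The two factors $|v_1(a)|$ cancel, so the end result survives, but both intermediate identities you state are false.

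Your proposed check on a diagonal $2\times2$ matrix cannot detect this. Its eigenvectors are coordinate vectors, so $|v_1|\in\{0,1\}$, and $v_1=0$ even makes the phase convention meaningless.

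\textbf{Two repairs.}

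(i) Keep your coordinates $\mathbf{v}=(\sqrt{1-|\mathbf{v}'|^2}\,e^{i\theta},\mathbf{v}')$ but take $d\sigma:=d\mathbf{v}'d\mathbf{v}'^*$. Then $\delta(\mathbf{v}^*\mathbf{v}-1)\,d\mathbf{v}d\mathbf{v}^*=\frac12\,d\theta\,d\mathbf{v}'d\mathbf{v}'^*$ holds exactly. The Jacobian is $|\det B|^2$ times the real determinant of $\delta\mathbf{v}'\mapsto U_\perp^*\delta\mathbf{v}$. Write $U=\begin{pmatrix} v_1&\mathbf{r}^T\\ \mathbf{v}'&W\end{pmatrix}$ and use $|\det W|=|v_1|$ and $W^*\mathbf{v}'=-v_1\overline{\mathbf{r}}$. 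This determinant is then $|\det W|^2\bigl(1+|\mathbf{v}'|^2/|v_1|^2\bigr)=|v_1|^2+|\mathbf{v}'|^2=1$. This computation, not the tangent-space heuristic, is what justifies ``the real Jacobian is $|\det B|^2$''.

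(ii) More simply, drop the phase fixing. Use the unitary coordinates $(t,\mathbf{u})=U^*\mathbf{v}$ on all of $\C^n$. Then $U^*(X-z\mathbf{1})\mathbf{v}=\bigl(t(z_a-z)+\mathbf{w}^*\mathbf{u},\,(\tilde B-z)\mathbf{u}\bigr)$, so near the $a$-th eigenline
\[
\delta^{(2n)}\bigl((X-z\mathbf{1})\mathbf{v}\bigr)=\frac{\delta^{(2)}(z-z_a)\,\delta^{(2n-2)}(\mathbf{u})}{|t|^2\,|\det B|^2}.
\]
Then $\frac1\pi\int_{\C}\delta(|t|^2-1)\,|t|^{-2}g(t)\,dt\,d\overline{t}=\int_0^{2\pi}g(e^{i\phi})\frac{d\phi}{2\pi}$ produces the phase average and the constant at once. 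This route also shows that the phase convention in the statement is immaterial: the distribution is just the uniform measure on each circle $\{(z_a,e^{i\phi}\mathbf{v}(a))\}$.
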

\noindent The expression above  can be reformulated as  
\[
\mathcal{P}_{n}(z,{\bf v})= \frac{(-1)^{n}}{\pi}\lim_{|w-z|\to 0}\frac{\partial^2}{\partial w\partial \overline{w}} {\small
\left[\delta^{(2n)}\left(\left(z{\bf 1}-X\right){\bf v}\right)\,
\det{ Y_X(w,\overline{w})}\right]},
\]
where  $Y_X(z,\overline{z})$ is the so called  \emph{hermitized matrix} defined as
\begin{equation}\label{Hermitized}
Y_X(z,\overline{z})=\left(\begin{array}{cc} 0 &  z{\bf 1}-X\\  \overline{z}{\bf 1}-X^{*} & 0\end{array}\right).
\end{equation}
Averaging over the randomness and using the boundedness of the moments we obtain
\begin{equation}\label{main_practical}
\left\langle\mathcal{P}_n(z,{\bf v})\right\rangle_{X}= \frac{(-1)^{n}}{\pi}\lim_{|w-z|\to 0}\frac{\partial^2}{\partial w\partial \bar w}\left\langle {\small
\left[\delta^{(2n)}\left(\left(z{\bf 1}-X\right){\bf v}\right)\,
\det{Y_X(w,\overline{w})}\right]}\right\rangle_{X}.
\end{equation}
\noindent
This formula provides an efficient framework for computing the ensemble average.
For this purpose one may employ the Fourier integral representation for the Gaussian regularization of the
$\delta-$distribution:
$\delta^{(2n)}\left(\left(X-z{\bf 1}\right){\bf v}\right)=
\lim_{\epsilon\to 0+}\delta^{(2n)}_{\epsilon}\left(\left(X-z{\bf 1}\right){\bf v}\right)$, where
\begin{equation}\label{Gaudelta_vec} 
  \delta^{(2n)}_{\epsilon}\left(\left(X-z{\bf 1}\right){\bf v}\right)
=\frac{1}{(2\pi\epsilon)^{n}}e^{-\frac{1}{2\epsilon } {\bf v}^* \left(X^*-\overline{z}{\bf 1}\right)
\left(X-z{\bf 1}\right)
{\bf v}}
=\int_{\mathbb{C}^n}e^{-\epsilon\frac{{\bf k}^*{\bf k}}{2}+\frac{i}{2}\left[{\bf k}^*\left(X-z{\bf 1}\right){\bf v}+
{\bf v}^*\left(X^*-\overline{z}{\bf 1}\right){\bf k}\right]}\frac{d{\bf k}d{\bf k}^*}{(2\pi)^{2n}}
\end{equation}
and we denoted  $d{\bf k}d{\bf k}^*:=\prod_{a=1}^nd \Re{k_a}d\Im{k_a}$ and ${\bf k}^*{\bf k}:=\sum_{a=1}^n\overline{k}_ak_a$ for ${\bf k}=(k_1,\ldots,k_n)^T\in \mathbb{C}^n$.
The Fourier integral representation of the $\delta-$distribution can be combined with the Berezin Gaussian integral representation for the
determinant:
\begin{align}\nonumber
& (-1)^{n} \det{Y_X(w,\overline{w})}=4^{n} \int \exp\left\{\frac{i}{2}\left(\mathbf{\Psi}_{+}^T,\mathbf{\Psi}_{-}^T\right)
  \left(\begin{array}{cc}0 & X-w{\bf 1}\\ X^*-\overline{w}{\bf 1}& 0\end{array}\right)\left(\begin{array}{c}\mathbf{\Phi}_{+} \\ \mathbf{\Phi}_{-}
  \end{array}\right)\right\}\,
  \mathcal {D}(\mathbf{\Psi},\mathbf{\Phi})\, \\
&\qquad \qquad =4^{n}\int \exp\left\{\frac{i}{2}\left(\mathbf{\Psi}_{-}^T (X^{*}-\overline{w})\mathbf{\Phi}_{+})+
\mathbf{\Psi}_{+}^T (X-w)\mathbf{\Phi}_{-}\right) \right\}\,
  \mathcal {D}(\mathbf{\Psi},\mathbf{\Phi})\label{Berezin_int}
  \end{align}
  where $\mathbf{\Psi}^T_{\sigma}=\left(\psi_{\sigma,1} ,\ldots,\psi_{\sigma,n} \right)$ and $\mathbf{\Phi}_{\sigma}=
  \left(\phi_{\sigma,1} ,\ldots,\phi_{\sigma,n} \right)^T$ for $\sigma=\pm$
  are vectors with anticommuting/Grassmann  entries,
  and we used the convention
  $\mathcal{D}(\mathbf{\Psi},\mathbf{\Phi}):=\mathcal{D}(\mathbf{\Psi}_{+},\mathbf{\Phi}_{+})
  \mathcal{D}(\mathbf{\Psi}_{-},\mathbf{\Phi}_{-})$ with
\[
\int F (\mathbf{\Psi}_{\sigma },\mathbf{\Phi}_{\sigma})\mathcal{D}(\mathbf{\Psi}_{\sigma },\mathbf{\Phi}_{\sigma }):=
\prod_{j=1}^n \partial_{\psi_{\sigma,j}} \partial_{\phi_{\sigma,j} } F (\mathbf{\Psi}_{\sigma },\mathbf{\Phi}_{\sigma })
,\quad \sigma=\pm.
\]
Note that $\mathcal{D}(\mathbf{\Psi}_{+},\mathbf{\Phi}_{+})
  \mathcal{D}(\mathbf{\Psi}_{-},\mathbf{\Phi}_{-})= (-1)^{n}\mathcal{D}(\mathbf{\Psi}_{+},\mathbf{\Phi}_{-})
  \mathcal{D}(\mathbf{\Psi}_{-},\mathbf{\Phi}_{+}).$
  
Exchanging the integration order (which can be done by our moment assumption), the average over $X$ boils down to computing the
Fourier transform
\[
\left\langle e^{\frac{i}{2}\mbox{\small Tr }\left(XM+X^{*}M'\right)}\right\rangle_{X}
\]
where $M,M'$ are $n\times n$ matrices whose components are even elements in the (complex) Grassmann algebra generated by
$\{\psi_{\pm,1},\dotsc ,\psi_{\pm,n},\phi_{\pm,1},\dotsc, \phi_{\pm,n} \}.$
In the next section we will apply this strategy to our model of two coupled Ginibre matrices.

\subsection{Implementation of the strategy for two coupled Ginibre matrices}

The main goal of the present section is to prove the following
\begin{theorem}\label{Thm2.1}
Let $X$ be the random matrix defined in Eq.(\ref{ensemble_def}). For  $N\ge 2$  the joint probability density of its eigenvalue $z$ at the origin
$z=0$ and the associated pair of vectors ${\bf v}_{1}$ and ${\bf v}_2$, with respect to the invariant measure on the sphere
\[
d\mu_N({\bf v}):=d\mu_H({\bf v})/C_{2N}=\delta\left({\bf v}^*_1{\bf v}_1+{\bf v}^*_2{\bf v}_2-1\right)\,d{\bf v}_1d{\bf v}^*_1 d{\bf v}_2d{\bf v}^*_2
\]
is given by
\begin{align} 
&
\mathbb{E}\left[\mathcal{P}_{2N}(z=0,  {\bf v}= ( {\bf v}_1,{\bf v}_2)\right]=
 \frac{1}{\pi^{2N+1}} \frac{e^{-|c|^2\left(\frac{{\bf v}_1^*{\bf v}_1}{{\bf v}_2^*{\bf v}_2}+\frac{{\bf v}_2^*{\bf v}_2}{{\bf v}_1^*{\bf v}_1}\right)}}
{\left({\bf v}_1^*{\bf v}_1\right)^N\left({\bf v}_2^*{\bf v}_2\right)^N}
\label{final_finite_size}\\
&
\qquad \qquad \times
\int_{\mathbb{C}^{2}} e^{-q_1\overline{q}_1-q_2\overline{q}_2}\left(q_1\overline{q}_2+|c|^2\right)^{N-2}\left(q_2\overline{q}_1+|c|^2\right)^{N-2}\,
\mathcal{D}(q_1,q_2,\overline{q}_1,\overline{q}_2)\,
\frac{dq_1d\overline{q}_1dq_2d\overline{q}_2}{\pi^2},\nonumber 
\end{align}   
where $\mathcal{D}(q_1,q_2,\overline{q}_1,\overline{q}_2)$ is given by the following expression:  
\begin{align}\label{parallel_explicit}
&\mathcal{D}(q_1,q_2,\overline{q}_1,\overline{q}_2)=|q_1|^2|q_2|^2\left(c^2\nu_1\overline{\nu}_2+\overline{c}^2\nu_2\overline{\nu}_1\right)\\
&\qquad\qquad +|c|^2\left[\left(1-\nu_1\nu_2\right)\left(q_1\overline{q}_2|q_1|^2+q_2\overline{q}_1|q_2|^2\right)+
|q_1|^2|q_2|^2\left(\frac{\overline{\nu}_1}{\nu_2}+\frac{\overline{\nu}_2}{\nu_1}\right)
\right]\nonumber\\
&\qquad\qquad
+|c|^2\left(c^2\nu_1\overline{\nu}_2+\overline{c}^2\nu_2\overline{\nu}_1\right)(q_1\overline{q}_2+q_2\overline{q}_1)\nonumber\\
&\qquad\qquad +|c|^4\left[\left(1-\nu_1\nu_2\right)(|q_1|^2+|q_2|^2)+q_1\overline{q}_2\left(\nu_1\overline{\nu}_1+\frac{\overline{\nu}_2}{\nu_1}\right)+
q_2\overline{q}_1\left(\nu_2\overline{\nu}_2+\frac{\overline{\nu}_1}{\nu_2}\right)
\right]\nonumber\\
&\qquad \qquad
+|c|^4\left(c^2\nu_1\overline{\nu}_2+\overline{c}^2\nu_2\overline{\nu}_1\right)(q_1\overline{q}_2+q_2\overline{q}_1)+|c|^6\left(\nu_1\overline{\nu}_1+\nu_2\overline{\nu}_2\right)\nonumber
\end{align}
 and we defined
   \begin{equation}\label{nu}
   \nu_1:=\frac{{\bf v}^*_1{\bf v}_2}{{\bf v}^*_1{\bf v}_1}, \quad \nu_2:=\frac{{\bf v}^*_2{\bf v}_1}{{\bf v}^*_2{\bf v}_2},
   \end{equation}
   so that $\overline{\nu}_1/\nu_2=\nu_1/\overline{\nu}_2=\frac{{\bf v}^*_2{\bf v}_2}{{\bf v}^*_1{\bf v}_1}$.
\end{theorem}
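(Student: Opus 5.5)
We will start from formula (\ref{main_practical}) with $n=2N$ and $z=0$, and combine the Fourier representation (\ref{Gaudelta_vec}) of the $\delta$-function with the Berezin representation (\ref{Berezin_int}) of $\det Y_X(w,\overline{w})$. Splitting ${\bf k}=({\bf k}_1,{\bf k}_2)$ and the Grassmann vectors $\mathbf{\Psi}_\pm,\mathbf{\Phi}_\pm$ into their $N$-component halves, the exponent becomes linear in $G_1,G_2$ and their adjoints. It has the form $\frac{i}{2}\tr(G_\sigma M_\sigma+G_\sigma^*M'_\sigma)$, with $M_\sigma= {\bf v}_\sigma{\bf k}_\sigma^*+\mathbf{\Phi}_{-,\sigma}\mathbf{\Psi}_{+,\sigma}^T$ and $M'_\sigma={\bf k}_\sigma{\bf v}_\sigma^*+\mathbf{\Phi}_{+,\sigma}\mathbf{\Psi}_{-,\sigma}^T$ (up to signs). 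The deterministic coupling $c{\bf 1}_N$ and the spectral parameter $w$ contribute terms such as $c\,{\bf k}_1^*{\bf v}_2$ and $c\,\mathbf{\Psi}_{+,1}^T\mathbf{\Phi}_{-,2}$. The Gaussian average gives $\exp\{-\frac14\tr M_\sigma M'_\sigma\}$. This produces a quartic action mixing commuting and Grassmann variables, built from the scalars ${\bf k}_\sigma^*{\bf k}_\sigma\,{\bf v}_\sigma^*{\bf v}_\sigma$, from bilinears such as ${\bf v}_\sigma^*\mathbf{\Phi}_{+,\sigma}\,\mathbf{\Psi}_{-,\sigma}^T{\bf k}_\sigma$, and from the purely Grassmann quartic term $(\mathbf{\Psi}_{+,\sigma}^T\mathbf{\Phi}_{+,\sigma})(\mathbf{\Psi}_{-,\sigma}^T\mathbf{\Phi}_{-,\sigma})$.

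Next we linearize. We will do the ${\bf k}$-integral first, since after averaging it is Gaussian. The quadratic form is ${\bf k}_\sigma^*{\bf k}_\sigma({\bf v}_\sigma^*{\bf v}_\sigma+\epsilon)$, and the linear sources are ${\bf v}$, the coupling terms and the Grassmann bilinears. Completing the square and sending $\epsilon\to0$ produces the prefactor $(p_1p_2)^{-N}$. It also produces the exponent $-|c|^2(p_1/p_2+p_2/p_1)$, which comes from the source terms $c\,{\bf k}_1^*{\bf v}_2$ and $\overline c\,{\bf k}_2^*{\bf v}_1$ against the variances $1/p_\sigma$; the $\nu_\sigma$-dependence comes from cross terms. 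The remaining Grassmann quartic terms, one for each block $\sigma=1,2$, will be decoupled by Hubbard--Stratonovich transformations. We use two complex variables $q_1,q_2$, with weight $e^{-|q_1|^2-|q_2|^2}$, pairing $\mathbf{\Psi}_{+,1}^T\mathbf{\Phi}_{+,1}$ with $\mathbf{\Psi}_{-,1}^T\mathbf{\Phi}_{-,1}$, and similarly for block 2. The Grassmann action is then quadratic, and the Berezin integral yields a determinant of a $2N\times2N$ block matrix built from $q_\sigma{\bf 1}_N$, $c{\bf 1}_N$, $w$ and the rank-one corrections involving ${\bf v}_\sigma$.

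Because all blocks are multiples of ${\bf 1}_N$ plus rank-one pieces along ${\bf v}_1,{\bf v}_2$, the determinant factorizes. On the orthogonal complement of $\mathrm{span}({\bf v}_1,{\bf v}_2)$ it gives the bulk factor $\det\begin{pmatrix} q_1&c\\ \overline c& q_2\end{pmatrix}$-type expressions raised to power $N-2$. At $w=0$ these become $(q_1\overline q_2+|c|^2)^{N-2}(q_2\overline q_1+|c|^2)^{N-2}$ after the appropriate pairing of the $\pm$ sectors. The remaining finite-dimensional determinant on the span of ${\bf v}_1,{\bf v}_2$, in the basis expressed through $p_\sigma$ and the overlaps $\nu_1,\nu_2$, together with the derivative $\partial_w\partial_{\overline w}$ at $w=0$, produces the polynomial $\mathcal D$. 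This requires care because ${\bf v}_1,{\bf v}_2$ are not orthogonal, which is where the ratios $\overline\nu_1/\nu_2$ enter. The $w$-derivatives act only on the low-dimensional part and on the $N-2$ powers; the latter contributions should cancel or be absorbed by integrating by parts in $q$.

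The main obstacle is this last step. We must organize the $2\times 2$ (in $\sigma$) by $2\times 2$ (in $\pm$) rank-corrected determinant and the $\partial_w\partial_{\overline w}$ derivative so that exactly the exponent $N-2$ survives and the correction collapses to the stated $\mathcal D$. We will first do it by a Schur complement with respect to the projector onto $\mathrm{span}({\bf v}_1,{\bf v}_2)$, treating the generic case where ${\bf v}_1,{\bf v}_2$ are linearly independent. We will then check that the constants are consistent: a factor $\pi^{-2N-1}$ collects the $1/\pi$ from (\ref{Meta_Kac_Rice_right_eigenvec}), the $(2\pi)^{-4N}$ and $4^{2N}$ from the representations, and $\pi^{2N}$ from the Gaussian ${\bf k}$-integral.
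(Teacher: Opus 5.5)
Your outline follows the paper's proof step for step:
\begin{itemize}
\item Fourier and Berezin representations, then the Ginibre average (\ref{avegin});
\item Gaussian $\mathbf{k}$-integrals giving $(p_1p_2)^{-N}e^{-|c|^2(p_1/p_2+p_2/p_1)}$;
\item Hubbard--Stratonovich in $q_1,q_2$;
\item a Grassmann determinant, which is $4N\times 4N$ and reduces to $\det(X_1X_2+|c|^2\mathbf{1}_{2N})$ because the $c$-blocks are scalar;
\item the split into $\mathrm{span}\{\mathbf{v}_1,\mathbf{v}_2\}$ and its complement.
\end{itemize}
The gap is in the step you flag yourself. Write the determinant as $\det M'(w,\overline{w})\,(\det P(w,\overline{w}))^{N-2}$, with $M'$ the $4\times 4$ block on the span and $P$ the $2\times 2$ bulk block. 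The Leibniz rule gives four terms in $\partial_w\partial_{\overline{w}}$ at $w=0$. The three in which a derivative hits $(\det P)^{N-2}$ do not ``cancel or get absorbed by integration by parts in $q$''; they vanish identically for every $q$, by two facts your proposal does not state.
\begin{enumerate}
\item $P$ depends on $w$ only through $-w^2,-\overline{w}^2$ on the diagonal and $i(q_2w+q_1\overline{w})$, $i(\overline{q}_1w+\overline{q}_2\overline{w})$ off it. Hence $\partial_w\det P=\partial_{\overline{w}}\det P=0$ at $w=0$.
\item $\det M'(0,0)=0$, and this is the essential point. At $w=0$ the block $M'$ is block-triangular. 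Its diagonal blocks are $(|c|^2+q_1\overline{q}_2)$ and $(|c|^2+\overline{q}_1q_2)$ times the rank-one projectors $\mathbf{1}-\mathbf{v}_2\mathbf{v}_2^*/p_2$ and $\mathbf{1}-\mathbf{v}_1\mathbf{v}_1^*/p_1$ restricted to the span. Heuristically, the $\delta$-function forces $\det X=0$.
\end{enumerate}
Without the second fact, $\det M'(0,0)\,\partial_w\partial_{\overline{w}}(\det P)^{N-2}$ would add a $(\det P)^{N-3}$ term and the stated form would fail. With both facts the result is exactly $(\det P(0,0))^{N-2}\,\partial_w\partial_{\overline{w}}\det M'|_{w=0}$, pointwise in $q$.

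Getting these projectors also requires fixing your Hubbard--Stratonovich step. Completing the square in $\mathbf{k}_\sigma$ produces a further Grassmann-quartic term. What you must decouple is therefore $(\mathbf{\Psi}^T_{+,\sigma}\mathbf{\Phi}_{+,\sigma})(\mathbf{\Psi}^T_{-,\sigma}[\mathbf{1}_N-\mathbf{v}_\sigma\mathbf{v}_\sigma^*/p_\sigma]\mathbf{\Phi}_{-,\sigma})$, not the bare product you wrote, and $\overline{q}_\sigma$ then multiplies this projector. The rank-one $c$-terms from the same completion supply the projector multiplying $|c|^2$.

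Finally, $\mathcal{D}=\partial_w\partial_{\overline{w}}\det M'|_{w=0}$ still has to be computed. Write $M'=M_0+wM_1+\overline{w}M_2-w^2-\overline{w}^2$ with $M_0:=M'(0,0)$. Because $M_0$ is singular, the formula $\det M_0[\tr(M_0^{-1}M_1)\tr(M_0^{-1}M_2)-\tr(M_0^{-1}M_1M_0^{-1}M_2)]$ cannot be applied directly. The paper replaces $M_0$ by an invertible regularization $M_{0,\varepsilon}$, applies that formula, and lets $\varepsilon\to0$, cross-checking with Mathematica; a direct expansion of the $4\times4$ determinant would also work.
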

Note  that for $|c|>0$  both norms $p_1={\bf v}_1^*{\bf v}_1$ and $p_2={\bf v}_2^*{\bf v}_2$
must be simultaneously nonvanishing and hence $\nu_1,\nu_2$ and $\overline{\nu}_1/\nu_2$ are well defined. \medskip

The above theorem further implies the following  result.
\begin{corollary}\label{Cor2.2}
Let $\Pi_N(z,p_1,p_2)$ be defined as in Eq.(\ref{def:Pi}). Then
the joint density $\mathbb{E}[\Pi_N(0, p_1,p_2)]$ (with respect to the measure $\delta(p_1+p_2-1)dp_1dp_2$)  of the
squared norms $p_1={\bf v}_1^*{\bf v}_1$ and $p_2={\bf v}_1^*{\bf v}_1$ corresponding to those eigenvectors
${\bf v} (a)$ of $X$ whose eigenvalues $z_a$ are in the vicinity of the  point $z=0$ takes the form
\begin{align}
&\mathbb{E}[\Pi_N(0, p_1,p_2)]=\frac{1}{\pi(N-1)!(N-2)!}\,\frac{e^{-|c|^2\left(\frac{p_1}{p_2}+\frac{p_2}{p_1}\right)}}
{p_1p_2}\,\label{final_finite_size1}\\
&\quad \times \int_{\mathbb{C}^{2}}  e^{-q_1\overline{q}_1-q_2\overline{q}_2}\left(q_1\overline{q}_2+|c|^2\right)^{N-2}\left(q_2\overline{q}_1+|c|^2\right)^{N-2}\,
\mathcal{U}(|q_1|^{2},|q_2|^{2},\overline{q}_1q_{2},\overline{q}_2q_{1})\,
\frac{dq_1d\overline{q}_1dq_2d\overline{q}_2}{\pi^2},\nonumber
\end{align}
where 
$\mathcal{U}$ is given by the following expression:
\[
\mathcal{U}(|q_1|^{2},|q_2|^{2},\overline{q}_1q_{2},\overline{q}_2q_{1})=
|c|^2\left[\frac{1}{N}\left(q_1|q_1|^2\overline{q}_2+q_2|q_2|^2\overline{q}_1\right)+\frac{1}{N-1}
|q_1|^2|q_2|^2\left(\frac{p_1}{p_2}+\frac{p_2}{p_1}\right)\right]
\]
\[
+|c|^4\left\{q_1\overline{q}_2\left[\frac{1}{N-1}\left(\frac{p_1}{p_2}+\frac{p_2}{p_1}\right)-\frac{1}{N}\frac{p_2}{p_1}\right]+
q_2\overline{q}_1\left[\frac{1}{N-1}\left(\frac{p_1}{p_2}+\frac{p_2}{p_1}\right)-\frac{1}{N}\frac{p_1}{p_2}\right]
\right\}
\]
\begin{equation}\label{parallel_explicit1}
+\frac{1}{N}|c|^4(|q_1|^2+|q_2|^2)+|c|^6\frac{1}{N(N-1)}\left(\frac{p_1}{p_2}+\frac{p_2}{p_1}\right).
\end{equation}
\end{corollary}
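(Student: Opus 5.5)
The corollary follows from Theorem \ref{Thm2.1} by integrating over all eigenvector directions with fixed norms $p_1,p_2$. By definition,
\[
\mathbb{E}[\Pi_N(0,p_1,p_2)]=C_{2N}\int \mathbb{E}[\mathcal{P}_{2N}(0,\mathbf{v})]\,\delta(p_1-\mathbf{v}_1^*\mathbf{v}_1)\,\delta(p_2-\mathbf{v}_2^*\mathbf{v}_2)\,d\mu_N(\mathbf{v}),
\]
with $C_{2N}=(2N-1)!/\pi^{2N}$. The factor $C_{2N}$ is there because $\mathcal{P}_{2N}$ is a density with respect to $d\mu_H/C_{2N}$. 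I will double-check this normalization against the prefactor at the end.

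Once the $\delta$-functions are imposed, the prefactor $e^{-|c|^2(p_1/p_2+p_2/p_1)}(p_1p_2)^{-N}$ in \eqref{final_finite_size} is constant. The only dependence on $\mathbf{v}$ beyond $p_1,p_2$ is through the overlap $w:=\mathbf{v}_1^*\mathbf{v}_2$. This enters $\mathcal{D}$ via $\nu_1=w/p_1$ and $\nu_2=\overline w/p_2$.

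The main step is to compute the averages of the monomials in $\nu$'s appearing in \eqref{parallel_explicit} over the product of spheres $\{\mathbf{v}_1^*\mathbf{v}_1=p_1\}\times\{\mathbf{v}_2^*\mathbf{v}_2=p_2\}$.

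\textbf{Volume.} Write $\mathbf{v}_\sigma=\sqrt{p_\sigma}\,\mathbf{u}_\sigma$ with $\mathbf{u}_\sigma$ unit vectors in $\C^N$. The Lebesgue measure factorizes, giving a volume factor
\[
\left(\frac{\pi^N p_1^{N-1}}{(N-1)!}\right)\left(\frac{\pi^N p_2^{N-1}}{(N-1)!}\right)
\]
times the uniform probability measure on each unit sphere. Combined with $(p_1p_2)^{-N}$, this produces the overall $1/(p_1p_2)$.

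\textbf{Moments of the overlap.} Let $t:=\mathbf{u}_1^*\mathbf{u}_2$. Under independent Haar vectors, $|t|^2\sim\mathrm{Beta}(1,N-1)$ and its phase is uniform. Hence all monomials with unequal powers of $t$ and $\overline t$ average to zero. This kills the terms $c^2\nu_1\overline\nu_2$ and $\overline c^2\nu_2\overline\nu_1$, since these contain $w^2$ and $\overline w^2$. The surviving moments are
\[
\mathbb{E}|t|^2=\frac1N,\qquad \mathbb{E}|t|^4=\frac{2}{N(N+1)}.
\]
I also need the combinations that actually appear:
\begin{itemize}
\item $\nu_1\nu_2=|w|^2/(p_1p_2)=|t|^2$, so $1-\nu_1\nu_2$ averages to $1-\tfrac1N$;
\item $\nu_1\overline\nu_1=|t|^2\,p_2/p_1$;
\item $\overline\nu_1/\nu_2=p_2/p_1$, which is deterministic.
\end{itemize}
Substituting these into \eqref{parallel_explicit} term by term gives a polynomial in $q$'s with coefficients rational in $N$ and in $p_1/p_2$.

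\textbf{Matching the prefactor.} Next I collect the constant $C_{2N}\,\pi^{-2N-1}\,\pi^{2N}/(N-1)!^2$. To reach the stated form with $1/(\pi(N-1)!(N-2)!)$ and the coefficients $1/N$, $1/(N-1)$ and $1/(N(N-1))$ in $\mathcal{U}$, I expect an integration by parts (or recombination) in the $q$-integral to be needed. Terms such as $(1-\tfrac1N)|q_1|^2|q_2|^2$ combined with $(q_1\overline q_2+|c|^2)^{N-2}$ can be rewritten using identities like
\[
\int e^{-|q|^2}\,\overline q_1\,\partial_{\overline q_1}(\cdots)=\int e^{-|q|^2}\,(|q_1|^2-1)(\cdots).
\]
This should redistribute the $1/N$ and $1/(N-1)$ weights and absorb the extra $(2N-1)!$ ratio.

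\textbf{Main obstacle.} I expect this bookkeeping to be the hardest part: combining the averaged $\mathcal{D}$ with exact Gaussian-integral identities in $q$ (e.g. expanding $(q_1\overline q_2+|c|^2)^{N-2}(q_2\overline q_1+|c|^2)^{N-2}$ binomially and noting that only phase-balanced terms survive) so that the normalization reduces exactly to \eqref{final_finite_size1} with $\mathcal{U}$ as in \eqref{parallel_explicit1}.
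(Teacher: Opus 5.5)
\textbf{Verdict: genuine gap.} Your strategy is the paper's. You integrate Theorem~\ref{Thm2.1} over $\{\mathbf{v}_1^*\mathbf{v}_1=p_1\}\times\{\mathbf{v}_2^*\mathbf{v}_2=p_2\}$, and you only need that the spherical averages of $w^2$ and $|w|^2$, with $w=\mathbf{v}_1^*\mathbf{v}_2$, are $0$ and $p_1p_2/N$. The paper gets these as $\mathcal{K}_2=0$ and $\mathcal{K}_1/\mathcal{K}_0=p_1p_2/N$ via a Fourier representation of the two $\delta$-constraints; your Beta-law argument for the Haar overlap is an equivalent shortcut.

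The problem is that you carry a wrong normalization and then misdiagnose the last step. The factor $C_{2N}$ must not appear. Theorem~\ref{Thm2.1} gives $\mathbb{E}[\mathcal{P}_{2N}]$ as a density with respect to $d\mu_N=\delta(\mathbf{v}^*\mathbf{v}-1)\,d\mathbf{v}d\mathbf{v}^*$ itself, and integrating a density against its own reference measure needs no conversion factor. Hence
$\mathbb{E}[\Pi_N(0,p_1,p_2)]=\int\mathbb{E}[\mathcal{P}_{2N}(0,\mathbf{v})]\,\delta(p_1-\mathbf{v}_1^*\mathbf{v}_1)\,\delta(p_2-\mathbf{v}_2^*\mathbf{v}_2)\,d\mathbf{v}_1d\mathbf{v}_1^*d\mathbf{v}_2d\mathbf{v}_2^*$,
and the leftover $\delta(\mathbf{v}^*\mathbf{v}-1)$ becomes the reference measure $\delta(p_1+p_2-1)$. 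Your plan to ``absorb'' the extra $(2N-1)!/\pi^{2N}$ by integrating by parts in the $q$-variables cannot work. Such identities rewrite the integrand but leave the value of the integral unchanged, so a wrong overall constant stays wrong. With that factor, $\mathbb{E}[\rho_N(0)]$ would grow factorially instead of tending to $2/\pi$.

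Once you drop $C_{2N}$, the step you call the main obstacle disappears, and the $q$-integral is never touched. Your volume factor times $\pi^{-2N-1}(p_1p_2)^{-N}$ gives the constant $1/(\pi((N-1)!)^2p_1p_2)$. Your moments turn \eqref{parallel_explicit} into the spherical average
\begin{align*}
\langle\mathcal{D}\rangle={}&|c|^2\Big[\big(1-\tfrac1N\big)\big(q_1\overline{q}_2|q_1|^2+q_2\overline{q}_1|q_2|^2\big)+|q_1|^2|q_2|^2\big(\tfrac{p_1}{p_2}+\tfrac{p_2}{p_1}\big)\Big]\\
&+|c|^4\Big[\big(1-\tfrac1N\big)\big(|q_1|^2+|q_2|^2\big)+q_1\overline{q}_2\big(\tfrac{p_1}{p_2}+\tfrac1N\tfrac{p_2}{p_1}\big)\\
&\qquad\quad+q_2\overline{q}_1\big(\tfrac{p_2}{p_1}+\tfrac1N\tfrac{p_1}{p_2}\big)\Big]+\tfrac{|c|^6}{N}\big(\tfrac{p_1}{p_2}+\tfrac{p_2}{p_1}\big).
\end{align*}
This is exactly $(N-1)\,\mathcal{U}$, using $\frac{1-1/N}{N-1}=\frac1N$ and $\frac{1}{N-1}-\frac1N=\frac{1}{N(N-1)}$. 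Writing $\frac{1}{((N-1)!)^2}=\frac{1}{(N-1)!\,(N-2)!}\cdot\frac{1}{N-1}$ then yields \eqref{final_finite_size1} verbatim, with no manipulation of the $q$-integral. The moment $\mathbb{E}|t|^4$ is never needed, since $\mathcal{D}$ is at most linear in $|w|^2$.
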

The rest of the section is devoted to the proof of these two results.

\begin{proof}[Proof of Theorem \ref{Thm2.1}]
Recall equations \eqref{main_practical}  \eqref{Gaudelta_vec}  \eqref{Berezin_int}.
The block structure of the matrix $X$ defined in Eq.(\ref{ensemble_def}) makes it natural to introduce
the following vectors:
\begin{equation}\label{set of vectors}
{\bf k}=\left(\begin{array}{c}{\bf k}_1\\{\bf k}_2\end{array}\right),\quad  \mathbf{\Phi}_{\pm}=\left(\begin{array}{c} \mathbf{\Phi}_{\pm,1}\\ \mathbf{\Phi}_{\pm,2}\end{array}\right), \,  \quad \mathbf{\Psi}^t_{\pm}=
\left(\begin{array}{c} \mathbf{\Psi}^t_{\pm,1}, \, \mathbf{\Psi}^t_{\pm,2}\end{array}\right),
\end{equation}
where  ${\bf k}_{\sigma}\in \mathbb{C}^N$ for $\sigma=1,2$ and each of the vectors
$\mathbf{\Phi}_{\pm,\sigma}$ and $\mathbf{\Psi}^t_{\pm,\sigma}$ has $N$ different anticommuting/Grassmann
entries. Using these definitions the equation (\ref{Gaudelta_vec}) can be written as
\begin{align}\label{Gaudelta_vec_block}
 \delta^{(4N)}_{\epsilon}\left(\left(X-z{\bf 1}\right){\bf v}\right)
&=\tfrac{1}{(2\pi)^{N}}\int_{\mathbb{C}^{2N}}
\,\,e^{-\frac{\epsilon}{2}\left({\bf k}_1^*{\bf k}_1+{\bf k}_2^*{\bf k}_2\right)}e^{\frac{i}{2}  \left ({\bf k}_1^* (G_{1}-z){\bf v}_1+ {\bf v}_1^* (G_{1}^{*}-\overline{z} ){\bf k}_1\right)}\\
&
\times 
 e^{\frac{i}{2}  \left ( {\bf k}_2^* (G_{2}-z){\bf v}_2+ {\bf v}_2^* (G_{2}^{*}-\overline{z} ){\bf k}_2\right)}
 e^{\frac{i}{2}  \left (c ( {\bf k}_1^* {\bf v}_2+ {\bf v}_1^* {\bf k}_2)+ \overline{c} ( {\bf k}_2^* {\bf v}_1+ {\bf v}_2^* {\bf k}_1)
\right)}\tfrac{d{\bf k}_1d{\bf k}_1^*}{\pi^{N}}\,\tfrac{d{\bf k}_2d{\bf k}_2^*}{\pi^{N}}\nonumber
\end{align}
whereas Eq.(\ref{Berezin_int}) takes the form 
\begin{align}\label{Berezin_int_block}
 \det{ Y_X(w,\overline{w})}&=4^{2N}\int
e^{\frac{i}{2} \left( \mathbf{\Psi}_{+,1}^t  (\mathbf{G}_{1}-w) \mathbf{\Phi}_{-,1}+
\mathbf{\Psi}_{+,2}^t  (\mathbf{G}_{2}-w) \mathbf{\Phi}_{-,2}+ c\mathbf{\Psi}_{+,1}^t \mathbf{\Phi}_{-,2}+
\overline{c}  \mathbf{\Psi}_{+,2}^t\mathbf{\Phi}_{-,1}
 \right)} {\scriptstyle \mathcal {D}(\mathbf{\Psi}_{+},\mathbf{\Phi}_{-})}\\
 & 
\times \int   e^{\frac{i}{2} \left(  \mathbf{\Psi}_{-,1}^t  (\mathbf{G}^*_1-\overline{w}) \mathbf{\Phi}_{+,1}+
 \mathbf{\Psi}_{-,2}^t  (\mathbf{G}^*_2-\overline{w}) \mathbf{\Phi}_{+,2}+c\mathbf{\Psi}_{-,1}^t \mathbf{\Phi}_{+,2}+
\overline{c}  \mathbf{\Psi}_{-,2}^t\mathbf{\Phi}_{+,1} \right)} \,
 {\scriptstyle  \mathcal {D}(\mathbf{\Psi}_{-},\mathbf{\Phi}_{+})},
\nonumber
  \end{align}
  where 
\[
  \mathcal {D}(\mathbf{\Psi}_{+},\mathbf{\Phi}_{-}) \mathcal {D}(\mathbf{\Psi}_{-},\mathbf{\Phi}_{+})=
  \mathcal {D}(\mathbf{\Psi}_{+,1},\mathbf{\Phi}_{-,1})  \mathcal {D}(\mathbf{\Psi}_{-,1},\mathbf{\Phi}_{+,1}) \mathcal {D}(\mathbf{\Psi}_{+,2},\mathbf{\Phi}_{-,2})
\mathcal {D}(\mathbf{\Psi}_{-,2},\mathbf{\Phi}_{+,2}).
\]
Putting all this together we obtain   
\[
\mathbb{E}\left[\mathcal{P}_{2N}(z, {\bf v}_1,{\bf v}_2\right]=
\frac{1}{\pi}\lim_{|w-z|\to 0}\frac{\partial^2}{\partial w\partial \overline{w}}
\ \lim_{\epsilon \to 0} \mathbb{E}\left[ \delta^{(4N)}_{\epsilon }\left(\left(z{\bf 1}-X\right){\bf v}\right)\,
\det{ Y_X(w,\overline{w})} \right],
\]
where
\begin{align}\label{aveperformedA}
&\mathbb{E}\left[ \delta^{(4N)}_{\epsilon }\left(\left(z{\bf 1}-X\right){\bf v}\right)\,
\det{Y_X(w,\overline{w})} \right]=\\
&\tfrac{4^{N}}{\pi^{2N}} \int_{\mathbb{C}^{2N}} e^{-\frac{\epsilon}{2}\left({\bf k}_1^*{\bf k}_1+{\bf k}_2^*{\bf k}_2\right)}
e^{-\frac{i}{2} \left\{   z\left({\bf k}_1^*{\bf v}_1+{\bf k}_2^*{\bf v}_2\right)+\overline{z}
\left({\bf v}_1^*{\bf k}_1+{\bf v}_2^*{\bf k}_2\right) \right\} }\times e^{\frac{i}{2} \left\{ c\left({\bf k}_1^*{\bf v}_2+{\bf v}_1^*{\bf k}_2\right)+\overline{c}
\left({\bf k}_2^*{\bf v}_1+{\bf v}_2^*{\bf k}_1\right)  \right\}  }\nonumber\\
&
\hspace{-0,2cm} \int\hspace{-0,1cm}
e^{-\frac{i}{2}   \left\{  w\left(\mathbf{ \Psi}_{+,1}^t\mathbf{\Phi}_{-,1}+\mathbf{ \Psi}_{+,2}^t\mathbf{\Phi}_{-,2}\right)
+\overline{w} 
\left(\mathbf{ \Psi}_{-,1}^t\mathbf{\Phi}_{+,1}+\mathbf{ \Psi}_{-,2}^t\mathbf{\Phi}_{+,2}\right)\right\}  }
e^{\frac{i}{2} \left\{  c\left(\mathbf{ \Psi}_{+,1}^t\mathbf{\Phi}_{-,2}+\mathbf{ \Psi}_{-,1}^t\mathbf{\Phi}_{+,2}\right)+\overline{c}
\left(\mathbf{ \Psi}_{+,2}^t\mathbf{\Phi}_{-,1}+\mathbf{ \Psi}_{-,2}^t\mathbf{\Phi}_{+,1}\right) \right\}  }\nonumber\\
&\quad \prod_{j=1,2} \mathbb{E}_{G_j}\left[
e^{\frac{i}{2}\mbox{\small Tr }\left(G_j\left({\bf v}_j\otimes {\bf k}^*_j-\mathbf{ \Phi}_{-,j}\otimes \mathbf{\Psi}^t_{+,j}\right)
+G_j^*\left({\bf k}_j\otimes {\bf v}^*_j-\mathbf{ \Phi}_{+,j}\otimes \mathbf{\Psi}^t_{-,j}\right)\right)}
\right]
{\scriptstyle \mathcal {D}(\mathbf{\Psi}_{+},\mathbf{\Phi}_{-})\,\mathcal {D}(\mathbf{\Psi}_{-},\mathbf{\Phi}_{+})}
 \tfrac{d{\bf k}_1d{\bf k}_1^*}{(2\pi)^{N}}\,\tfrac{d{\bf k}_2d{\bf k}_2^*}{(2\pi)^{N}}.
\nonumber
\end{align}
Note the the average can be moved inside the integral since the random matrix elements are normal
distributed.  Hence the averaging amounts to using the following identity:
\begin{equation}\label{avegin}
\left\langle e^{\frac{i}{2}\mbox{\small Tr }\left(GB+G^*A\right)}\right\rangle_{G\in GinUE}
=e^{-\frac{1}{4}\mbox{\small Tr }\left(BA\right)}.
\end{equation}
Applying now the rule Eq.(\ref{avegin}) to evaluating the two averages above we get for $j=1,2,$
\begin{align}
& \mathbb{E}_{G_j}\left[ e^{\frac{i}{2}\mbox{\small Tr }\left(G_j\left({\bf v}_j\otimes {\bf k}^*_j-
\mathbf{ \Phi}_{-,j}\otimes \mathbf{\Psi}^t_{+,j}\right)
+G_j^*\left({\bf k}_j\otimes {\bf v}^*_j-\mathbf{ \Phi}_{+,j}\otimes \mathbf{\Psi}^t_{-,j}\right)\right)}
\right]\\
&=
e^{-\frac{1}{4} \left\{  \left( {\bf k}_{j}^*{\bf k}_{j}\right) \left({\bf v}_{j}^*{\bf v}_{j}\right)-
\left( {\bf k}_{j}^*\mathbf{\Phi}_{+,j}\right)\left(\mathbf{\Psi}_{-,j}^t{\bf v}_{j}\right)-
\left({\bf v}_{j}^*\mathbf{\Phi}_{-,j}\right)\left(\mathbf{\Psi}_{+,j}^t{\bf k}_{j}\right)-
\left(\mathbf{\Psi}^t_{+,j}\mathbf{\Phi}_{+,j}\right)\left(\mathbf{\Psi}_{-,j}^t\mathbf{\Phi}_{-,j}\right)
\right\}}.\nonumber
\end{align}
At this point the following observations are due. The integral over both vectors ${\bf k}_{j}, j=1,2$ remains (i) Gaussian after the Ginibre averaging 
and (ii) convergent even in the limit $\epsilon\to 0+$ due to the fact that
for $|c|>0$  both norms $p_1={\bf v}_1^*{\bf v}_1$ and $p_2={\bf v}_2^*{\bf v}_2$  must be simultaneously nonvanishing.
Therefore one can safely set $\epsilon=0$ in the integrand and perform the corresponding ${\bf k}$-integrations explicitly.
This can be done for any value of the spectral parameter $z$, but for $z\ne 0$
the resulting expression and subsequent manipulations become relatively cumbersome. Having in mind our main goal -
studying in detail the process of eigenvector ergodization due to coupling between subsystems -
we concentrate from this point on studying such an ergodization at the origin of the complex plane $z=0$ for the reasons
of simplicity and transparency of the presentation.

For $z=0$ and $\epsilon=0$ the Gaussian integral for ${\bf k}_1$ has the following form:
\begin{equation}\label{k1}
\int_{\mathbb{C}^{N}} \tfrac{d{\bf k}_1d{\bf k}_1^*}{(2\pi)^{N}}\,
\,\,e^{-\frac{1}{4}{\bf k}_1^*{\bf k}_1({\bf v}_1^*{\bf v}_1)+\frac{i}{2}\left({\bf k}_1^*{\bf a}_1+{\bf b}_1^t{\bf k}_1\right)}=\tfrac{1}{({\bf v}_1^*{\bf v}_1)^N}\,e^{-\frac{{\bf b}_1^t{\bf a}_1}{{\bf v}_1^*{\bf v}_1}},
\end{equation}
where ${\bf a}_1:=c{\bf v}_2-\frac{i}{2}\mathbf{\Phi}_{+,1}\left(\mathbf{\Psi}^t_{-,1}{\bf v}_1\right),$
${\bf b}^t_1:=\overline{c}{\bf v}^*_2-\frac{i}{2}\left({\bf v}_1^*\mathbf{\Phi}_{-,1}\right)\mathbf{\Psi}^t_{+,1}$ and
\begin{align*}
{\scriptstyle {\bf b}_1^t{\bf a}_1}&{\scriptstyle = |c|^{2} {\bf v}^*_2{\bf v}_2- \tfrac{i}{2}c \left({\bf v}_1^*\mathbf{\Phi}_{-,1}\right)
 \left (\mathbf{\Psi}^t_{+,1} {\bf v}_2\right)-  \tfrac{i}{2}\overline{c}\left({\bf v}^*_2\mathbf{\Phi}_{+,1}\right)
 \left(\mathbf{\Psi}^t_{-,1}{\bf v}_1\right)- \tfrac{1}{4} \left({\bf v}_1^*\mathbf{\Phi}_{-,1}\right)
  \left(\mathbf{\Psi}^t_{+,1}\mathbf{\Phi}_{+,1}\right)\left(\mathbf{\Psi}^t_{-,1}{\bf v}_1\right)}\\
&{\scriptstyle = |c|^{2} {\bf v}^*_2{\bf v}_2- \tfrac{i}{2}\left( c \left({\bf v}_1^*\mathbf{\Phi}_{-,1}\right)
 \left (\mathbf{\Psi}^t_{+,1} {\bf v}_2\right) + \overline{c}\left({\bf v}^*_2\mathbf{\Phi}_{+,1}\right)
 \left(\mathbf{\Psi}^t_{-,1}{\bf v}_1\right) \right)+ \tfrac{1}{4} \left(\mathbf{\Psi}^t_{+,1}\mathbf{\Phi}_{+,1}\right)
  \left(\mathbf{\Psi}^t_{-,1}{\bf v}_1\otimes {\bf v}_1^*\mathbf{\Phi}_{-,1}\right).}
\end{align*}
Similarly,  the  Gaussian integral for ${\bf k}_2$ has the form
\begin{equation}\label{k2}
\int_{\mathbb{C}^{N}} \tfrac{d{\bf k}_2d{\bf k}_2^*}{(2\pi)^{2N}}\,
\,\,e^{-\frac{1}{4}{\bf k}_2^*{\bf k}_2({\bf v}_2^*{\bf v}_2)+\frac{i}{2}\left({\bf k}_2^*{\bf a}_2+{\bf b}_2^t{\bf k}_2\right)}=
\tfrac{1}{({\bf v}_2^*{\bf v}_2)^N}\,e^{-\frac{{\bf b}_2^t{\bf a}_2}{{\bf v}_2^*{\bf v}_2}},
\end{equation}
where ${\bf a}_2:=\overline{c}{\bf v}_1- \frac{i}{2}\mathbf{\Phi}_{+,2}\left(\mathbf{\Psi}^t_{-,2}{\bf v}_2\right),$ $ {\bf b}^t_2:=c{\bf v}^*_1- \frac{i}{2}\left({\bf v}_2^*\mathbf{\Phi}_{-,2}\right)\mathbf{\Psi}^t_{+,2},$
and
\begin{align*}
{\scriptstyle{\bf b}_2^t{\bf a}_2}&={\scriptstyle |c|^{2} {\bf v}^*_1{\bf v}_1-  \tfrac{i}{2}c \left({\bf v}_1^*\mathbf{\Phi}_{+,2}\right)
 \left (\mathbf{\Psi}^t_{-,2} {\bf v}_2\right)-   \tfrac{i}{2}\overline{c}\left({\bf v}^*_2\mathbf{\Phi}_{-,2}\right)
 \left(\mathbf{\Psi}^t_{+,2}{\bf v}_1\right)- \tfrac{1}{4} \left({\bf v}_2^*\mathbf{\Phi}_{-,2}\right)
  \left(\mathbf{\Psi}^t_{+,2}\mathbf{\Phi}_{+,2}\right)\left(\mathbf{\Psi}^t_{-,2}{\bf v}_2\right)}\\
&{\scriptstyle =  |c|^{2} {\bf v}^*_1{\bf v}_1-  \tfrac{i}{2}\left(c
\left({\bf v}_1^*\mathbf{\Phi}_{+,2}\right)
 \left (\mathbf{\Psi}^t_{-,2} {\bf v}_2\right)+\overline{c}\left({\bf v}^*_2\mathbf{\Phi}_{-,2}\right)
 \left(\mathbf{\Psi}^t_{+,2}{\bf v}_1\right)
\right)+ \tfrac{1}{4}  \left(\mathbf{\Psi}^t_{+,2}\mathbf{\Phi}_{+,2}\right)
 \left(\mathbf{\Psi}^t_{-,2}{\bf v}_2\otimes {\bf v}_2^*\mathbf{\Phi}_{-,2}\right).}
\end{align*}
Substituting those identities back to Eq.(\ref{aveperformedA}) we get
\begin{align}\label{aveperformedC}
&\mathbb{E}\left[ \delta^{(4N)}\left(\left(z{\bf 1}-X\right){\bf v}\right)\,
\det{Y_X(w,\overline{w})} \right]=\tfrac{4^{N}}{\pi^{2N}}\tfrac{e^{-|c|^2\left(\frac{{\bf v}_1^*{\bf v}_1}{{\bf v}_2^*{\bf v}_2}+\frac{{\bf v}_2^*{\bf v}_2}{{\bf v}_1^*{\bf v}_1}\right)}}
{\left({\bf v}_1^*{\bf v}_1\right)^N\left({\bf v}_2^*{\bf v}_2\right)^N}\\
& \int \hspace{-0,1cm} e^{-\frac{i}{2}   \left\{  w\left(\mathbf{ \Psi}_{+,1}^t\mathbf{\Phi}_{-,1}+\mathbf{ \Psi}_{+,2}^t\mathbf{\Phi}_{-,2}\right)
+\overline{w} 
\left(\mathbf{ \Psi}_{-,1}^t\mathbf{\Phi}_{+,1}+\mathbf{ \Psi}_{-,2}^t\mathbf{\Phi}_{+,2}\right)\right\}  }
e^{\frac{i}{2} \left\{  c\left(\mathbf{ \Psi}_{+,1}^t\mathbf{\Phi}_{-,2}+\mathbf{ \Psi}_{-,1}^t\mathbf{\Phi}_{+,2}\right)+\overline{c}
\left(\mathbf{ \Psi}_{+,2}^t\mathbf{\Phi}_{-,1}+\mathbf{ \Psi}_{-,2}^t\mathbf{\Phi}_{+,1}\right) \right\}  }\nonumber\\
& \quad e^{- \tfrac{i}{2 {\bf v}_1^*{\bf v}_1}
  \left( c \left (\mathbf{\Psi}^t_{+,1} {\bf v}_2\right)\left( {\bf v}_1^*\mathbf{\Phi}_{-,1}\right) +
\overline{c} \left(\mathbf{\Psi}^t_{-,1}{\bf v}_1\right) \left({\bf v}^*_2\mathbf{\Phi}_{+,1}\right) \right)
-\tfrac{i}{2 {\bf v}_2^*{\bf v}_2} \left( c\left (\mathbf{\Psi}^t_{-,2} {\bf v}_2\right)
 \left({\bf v}_1^*\mathbf{\Phi}_{+,2}\right)+\overline{c}
 \left(\mathbf{\Psi}^t_{+,2}{\bf v}_1\right)\left({\bf v}^*_2\mathbf{\Phi}_{-,2}\right)
\right)}\nonumber\\
&\quad e^{\frac{1}{4}  \left(\mathbf{\Psi}_{+,1}^t \mathbf{\Phi}_{+,1}\right) \left(\mathbf{\Psi}_{-,1}^t\left[{\bf 1}_{N}-
\frac{{\bf v}_{1}\otimes {\bf v}_1^*}{{\bf v}^*_1{\bf v}_1}\right]\mathbf{\Phi}_{-,1}\right)+\frac{1}{4}\left(\mathbf{\Psi}_{+,2}^t \mathbf{\Phi}_{+,2}\right) \left(\mathbf{\Psi}_{-,2}^t\left[{\bf 1}_{N}-\frac{{\bf v}_2\otimes {\bf v}_2^*}{{\bf v}^*_2{\bf v}_2}\right]\mathbf{\Phi}_{-,2}\right) } {\scriptstyle \mathcal {D}(\mathbf{\Psi}_{+},\mathbf{\Phi}_{-})\,\mathcal {D}(\mathbf{\Psi}_{-},\mathbf{\Phi}_{+})}
\nonumber 
\end{align}
\vspace{0.5ex}

The last line in the integrand above contains the exponential of two terms {\it quartic} in the  anticommuting components, making the whole integral non-Gaussian.
A way to progress further is to employ the so-called
 Hubbard-Stratonovich transformation, that is by paying the price of auxilliary Gaussian integrals to trade those quartic terms for terms {\it quadratic} in the anticommuting variables.
Doing this amounts to exploiting the following identities:
\begin{align}\label{HS1}
& e^{\frac{1}{4}  \left(\mathbf{\Psi}_{+,1}^* \mathbf{\Phi}_{+,1}\right)
\left(\mathbf{\Psi}_{-,1}^t\left[{\bf 1}_{N}-\frac{{\bf v}_{1}\otimes {\bf v}_1^*}{{\bf v}^*_1{\bf v}_1}\right]\mathbf{\Phi}_{-,1}\right)}=
\int_{\mathbb{C}} e^{-|q_{1}|^{2}}e^{-\frac{q_1}{2}\left(\mathbf{\Psi}_{+,1}^t \mathbf{\Phi}_{+,1}\right)
-\frac{\overline{q}_1}{2}\left(\mathbf{\Psi}_{-,1}^t
\left[{\bf 1}_{N}-\frac{{\bf v}_{1}\otimes {\bf v}_1^*}{{\bf v}_1^*{\bf v}_1}\right]\mathbf{\Phi}_{-,1}\right)}
\tfrac{dq_1d\overline{q}_1}{\pi}\\
& e^{\frac{1}{4} \left(\mathbf{\Psi}_{+,2}^t \mathbf{\Phi}_{+,2}\right)
\left(\mathbf{\Psi}_{-,2}^t\left[{\bf 1}_{N}-\frac{{\bf v}_2\otimes {\bf v}_2^*}{{\bf v}^*_2{\bf v}_2}\right]\mathbf{\Phi}_{-,2}\right) }=
\int_{\mathbb{C}} e^{-|q_{2}|^{2}}e^{-\frac{q_2}{2}   \left(\mathbf{\Psi}_{+,2}^t \mathbf{\Phi}_{+,2}\right)
-\frac{\overline{q}_2}{2}
\left(\mathbf{\Psi}_{-,2}^t\left[{\bf 1}_{N}-\frac{{\bf v}_2\otimes {\bf v}_2^*}{{\bf v}^*_2{\bf v}_2}\right]\mathbf{\Phi}_{-,2}\right)}
\tfrac{dq_2d\overline{q}_2}{\pi}\label{HS2}
\end{align}
Substituting these two integrals back to Eq.(\ref{aveperformedC}) and changing the order of integration one is then
immediately able to perform
 the (by now, Gaussian) integrals over all anticommuting variables and get finally:
\begin{align}\label{intermediate}
&\mathbb{E}\left[\delta^{(4N)}\left(\left(z{\bf 1}-X\right){\bf v}\right)\,
\det{Y_X(w,\overline{w})}\right]=\nonumber\\
&\qquad \qquad\qquad
 =\tfrac{4^{N}}{\pi^{2N}}\tfrac{1}{2^{4N}}\tfrac{e^{-|c|^2\left(\frac{{\bf v}_1^*{\bf v}_1}{{\bf v}_2^*{\bf v}_2}+\frac{{\bf v}_2^*{\bf v}_2}{{\bf v}_1^*{\bf v}_1}\right)}}
{({\bf v}_1^*{\bf v}_1)^N({\bf v}_2^*{\bf v}_2)^N}\,\int e^{-q_1\overline{q}_1-q_2\overline{q}_2}
\int
e^{- \mathbf{\Psi}^t D_N \mathbf{\Phi}}
{\scriptstyle \mathcal {D}(\mathbf{\Psi}_{+},\mathbf{\Phi}_{-})\,\mathcal {D}(\mathbf{\Psi}_{-},\mathbf{\Phi}_{+})}\nonumber\\
&\qquad \qquad \qquad=\tfrac{1}{\pi^{2N}}\tfrac{e^{-|c|^2\left(\frac{{\bf v}_1^*{\bf v}_1}{{\bf v}_2^*{\bf v}_2}+\frac{{\bf v}_2^*{\bf v}_2}{{\bf v}_1^*{\bf v}_1}\right)}}
{({\bf v}_1^*{\bf v}_1)^N({\bf v}_2^*{\bf v}_2)^N}\,\int e^{-q_1\overline{q}_1-q_2\overline{q}_2}\det{D_N}\,\,
\tfrac{dq_1d\overline{q}_1dq_2d\overline{q}_2}{\pi^2}
\end{align}
 where in the first line 
\[
\mathbf{\Psi}^t= (\mathbf{\Psi}_{+,1}^t,\mathbf{\Psi}_{-,1}^t,\mathbf{\Psi}_{+,2}^t,\mathbf{\Psi}_{-,2}^t),
\qquad \mathbf{\Phi}= (\mathbf{\Phi}_{-,1}^{t},\mathbf{\Phi}_{+,1}^{t}, \mathbf{\Phi}_{-,2}^{t},\mathbf{\Phi}_{+,2}^{t} )^{t},
\]
and $D_{N}$ is the $4N\times 4N$ matrix
\begin{align}\label{maindet}
D_{N}&=D_{N} (w,\bar w)= D_N(w,\bar w;q_1,q_2,  \bar q_{1} ,\bar q_{2})=\\
&\quad
\begin{pmatrix}
i \left (w{\bf 1}_N +c \tfrac{{\bf v}_2\otimes {\bf v}_1^*}{{\bf v}_1^*{\bf v}_1}\right ) & q_{1}{\bf 1}_N  & -ic{\bf 1}_N  & 0 \\
\bar q_{1}\left ({\bf 1}_N -\tfrac{{\bf v}_{1}\otimes {\bf v}_1^*}{{\bf v}_1^*{\bf v}_1} \right)&
i\left  ( \bar w{\bf 1}_N +\bar c \tfrac{{\bf v}_1\otimes {\bf v}_2^*}{{\bf v}_1^*{\bf v}_1} \right) & 0& -ic{\bf 1}_N \\
-i \bar c{\bf 1}_N  & 0 &  i \left( w{\bf 1}_N +\bar c \tfrac{{\bf v}_1\otimes {\bf v}_2^*}{{\bf v}_2^*{\bf v}_2} \right)&  q_{2}{\bf 1}_N  \\
0 & -i \bar c{\bf 1}_N  & \bar q_{2} \left( {\bf 1}_N -\tfrac{{\bf v}_2\otimes {\bf v}_2^*}{{\bf v}_2^*{\bf v}_2}\right ) & i \left ( \bar w{\bf 1}_N + c\tfrac{{\bf v}_2\otimes {\bf v}_1^*}{{\bf v}_2^*{\bf v}_2}\right )\\
\end{pmatrix}
\nonumber
 \end{align}
The next task is to evaluate $\det D_N$. Since the off-diagonal blocks are multiples of the identity we can write
\[
\det D_N= \det  \begin{pmatrix}
X_{1} & -ic{\bf 1}_{2N} \\
-i \bar c  {\bf 1}_{2N}& X_{2}\\
\end{pmatrix}= \det (X_{1}X_{2}+|c|^{2}{\bf 1}_{2N})= \det M
\]
where $M$ is the $2N\times 2N$ block matrix 
\[
M= M(w,\bar w)=M(w,\bar w;q_1,q_2,  \bar q_{1} ,\bar q_{2})=\begin{pmatrix}
M_{11} & M_{12}\\
M_{21} & M_{22} \\
 \end{pmatrix},
\]
with $N\times N$ blocks
\begin{align*}
M_{11}&=-w^{2}{\bf 1}_{N}-w\left (c  \tfrac{{\bf v}_2\otimes {\bf v}_1^*}{{\bf v}_1^*{\bf v}_1}+\bar c\tfrac{{\bf v}_1\otimes {\bf v}_2^*}{{\bf v}_2^*{\bf v}_2} \right)+
(|c|^{2}+  q_{1} \bar q_{2}  ) \left( {\bf 1}_{N}-\tfrac{{\bf v}_2\otimes {\bf v}_2^*}{{\bf v}_2^*{\bf v}_2} \right )\\
M_{22}&=-\bar w^{2}{\bf 1}_{N}-\bar w\left (  c \tfrac{{\bf v}_2\otimes {\bf v}_1^*}{{\bf v}_2^*{\bf v}_2}+\bar c \tfrac{{\bf v}_1\otimes {\bf v}_2^*}{{\bf v}_1^*{\bf v}_1}\right) +
 (|c|^{2}+ \bar q_{1}  q_{2}  ) \left({\bf 1}_{N} -\tfrac{{\bf v}_1\otimes {\bf v}_1^*}{{\bf v}_1^*{\bf v}_1}\right )\\
M_{12}&=iw  q_{2}{\bf 1}_{N}+ i \bar w  q_{1}{\bf 1}_{N}+ic\left ( q_{2} \tfrac{{\bf v}_2\otimes {\bf v}_1^*}{{\bf v}_1^*{\bf v}_1}+q_{1}\tfrac{{\bf v}_2\otimes {\bf v}_1^*}{{\bf v}_2^*{\bf v}_2}\right )\\
M_{21}&=wi\bar q_{1}\left ({\bf 1}_{N}- \tfrac{{\bf v}_1\otimes {\bf v}_1^*}{{\bf v}_1^*{\bf v}_1}\right)+ \bar w i\bar q_{2}\left( {\bf 1}_{N}-\tfrac{{\bf v}_2\otimes {\bf v}_2^*}{{\bf v}_2^*{\bf v}_2} \right ).
\end{align*}
Remember the definition \eqref{nu} of $\nu_{1},\nu_{2}$ and set 
\[
\lambda :=  \tfrac{\|{\bf v}_{2}\|}{\|{\bf v}_{1}\|},\qquad {\bf u}_{1}:= \tfrac{ {\bf v}_{1}}{\| {\bf v}_{1}\|}  ,
\qquad  {\bf u}_{2}:= \tfrac{  ( {\bf v}_{2}-\nu_{1}{\bf v}_{1}) }{\| {\bf v}_{2}\|\sqrt{1-\nu_{1}\nu_{2} }} .
\]
Then $\|{\bf u}_1\|=\|{\bf u}_2\|=1$ and 
\[
{\bf u}_{2}^{*}{\bf u}_{1}={\bf u}_{2}^{*}{\bf v}_{1}=0={\bf u}_{1}^{*}{\bf u}_{2}={\bf u}_{1}^{*}{\bf u}_{2}.
\]
Let ${\bf u}_{3},\dots {\bf u}_{N}$ be an orthonormal basis of $(\mathrm{span}\{{\bf v}_{1},{\bf v}_{2} \})^{\perp}$.
In the basis $({\bf u}_{1},\dots {\bf u}_{N}),$ the matrix $M$ takes the block-diagonal form
\[
M= \begin{pmatrix}
M' & 0 \\
0 & M''\\
\end{pmatrix}
\]
where  $M''=P\otimes {\bf 1}_{N-2}, $ with
\[
P:= \begin{pmatrix}
-w^{2}+(|c|^{2}+  q_{1} \bar q_{2}  ) & i ( q_{2}w+  q_{1} \bar w) \\
i ( w\bar q_{1}+ \bar w \bar q_{2})& -\bar w^{2}+ (|c|^{2}+ \bar q_{1}  q_{2}  ) \\
\end{pmatrix}.
\]
Finally $M'$ is a linear combination of  $4\times 4$ matrices
\[
M' (w,\bar w)=M_{0} -w^{2}{\bf 1}_{4}  + w M_{1} -\bar w^{2} {\bf 1}_{4}+\bar w  M_{2} ,
\]
where $M_{0}=M_{1} (0,0):= \begin{pmatrix}
A_{0} & C_{0}\\
0 & B_{0}\\
\end{pmatrix}$ with $2\times 2$ blocks
\begin{align*}
A_{0}&:= (|c|^{2}+  q_{1}\bar q_{2}) \alpha_{0}, \qquad
\alpha_{0}:=  \begin{pmatrix}
1-\nu_{1}\nu_{2}& -\lambda \bar \nu_{2}\sqrt{1-\nu_{1}\nu_{2}}\\
-\lambda  \nu_{2}\sqrt{1-\nu_{1}\nu_{2}}& \nu_{1}\nu_{2}
\end{pmatrix}\\
B_{0}&:= (|c|^{2}+ \bar q_{1} q_{2})  E_{22},\qquad
E_{22}:=  \begin{pmatrix}
0& 0\\
0& 1\\
\end{pmatrix}\\
C_{0}&:= ic\begin{pmatrix}
q_{2}\nu_{1}+q_{1}\bar \nu_{2}&0\\
( q_{2}\lambda+q_{1}\lambda^{-1}) \sqrt{1-\nu_{1}\nu_{2}}& 0\\
\end{pmatrix},
\end{align*}
\[
 M_{1} :=  \begin{pmatrix}
-\alpha_{1} & i q_{2}{\bf 1}_{2} \\
i\bar q_{1} E_{22} & 0\\
\end{pmatrix}, \qquad \alpha_{1}=   \begin{pmatrix}
c\nu_{1}+\bar c \nu_{2}&  \bar c \lambda^{-1} \sqrt{1-\nu_{1}\nu_{2}}\\
 c \lambda\sqrt{1-\nu_{1}\nu_{2}}& 0\\
\end{pmatrix}
\]
\[
M_{2}  := \begin{pmatrix}
0 & i q_{1}{\bf 1}_{2} \\
i\bar q_{2} \alpha_{0} & -\beta_{1}\\
\end{pmatrix}, \qquad  \beta_{1}=   \begin{pmatrix}
\bar c\bar\nu_{1}+c \bar\nu_{2}&  \bar c \lambda \sqrt{1-\nu_{1}\nu_{2}}\\
 c \lambda^{-1}\sqrt{1-\nu_{1}\nu_{2}}& 0\\
\end{pmatrix}.
\]
Putting all this together we obtain
\[
\det D (w,\bar w)=\det M' (w,\bar w) \ ( \det P (w,\bar w) )^{N-2}.
\]
We compute $\det M'(0,0)= \det A_{0}\det B_{0}=0$ and $\partial_{w}  \det P_{|w=\bar w=0}= \partial_{\bar w}  \det P_{|w=\bar w=0}= 0. $
It follows
\begin{align*}
\partial_{w}\partial_{\bar w} \det D_{|w=\bar w=0}&=   \det P (0,0 )^{N-2}  (\partial_{w} \partial_{\bar w} \det M')_{|w=\bar w=0}\\
&=
( q_{1}\bar q_{2}+|c|^{2})^{N-2} ( \bar q_{1} q_{2} +  |c|^{2})^{N-2}  (\partial_{w} \partial_{\bar w} \det M')_{|w=\bar w=0} .
\end{align*}
To compute the derivatives, one may diagonalize $M',$ or the equivalent $8\times 8$ matrix obtained restricting to the subspace
$\mathrm{span}\{{\bf v}_{1}, {\bf v}_{2}\},$  by solving the eigenvector equations and  compute the corresponding determinant 
using Mathematica software. The result can be schematically
written as follows:
   \begin{equation}\label{factor_parallel}
    \det M'(w,\overline{w};q_1,q_2,\overline{q}_1,\overline{q}_2)=w\overline{w}\left(\mathcal{F}(q_1,q_2,\overline{q}_1,\overline{q}_2)+
    O (|w|)\right)
\end{equation}
where $\mathcal{F}(q_1,q_2,\overline{q}_1,\overline{q}_2)$ is explicitly given in Eq.(\ref{parallel_explicit}).

We give below an analytic derivation of the formula. The starting point is a regularization of $M_{0}$ making the matrix invertible. Precisely we write
\[
 (\partial_{w} \partial_{\bar w} \det M')_{|w=\bar w=0} =
 \lim_{\varepsilon  \to 0+}   (\partial_{w} \partial_{\bar w} \det M'_{\varepsilon } (w,\bar w  ))_{|w=\bar w=0},
\]
with $M'_{\varepsilon } (w,\bar w ):= M_{0,\varepsilon } -w^{2}{\bf 1}_{4}  + w M_{1} -\bar w^{2} {\bf 1}_{4} +\bar w M_{2}$ and
\[
M_{0,\varepsilon} := \begin{pmatrix}
A_{0,\varepsilon }& C_{0}\\
0 & B_{0,\varepsilon }\\
\end{pmatrix},\qquad A_{0,\varepsilon }:= (|c|^{2}+  q_{1}\bar q_{2}) (\alpha_{0}+\hat{\varepsilon} ),
\quad  B_{0,\varepsilon }:=( E_{22}+\hat{\varepsilon} ).
\]
where we introduced $\hat{\varepsilon}:=diag(\varepsilon,\varepsilon)$.
Hence we obtain
\begin{align*}
&(\partial_{w} \partial_{\bar w} \det M')_{|w=\bar w=0} =\\
&\quad 
 \lim_{\varepsilon  \to 0+} \left[\tr \big(M_{0,\varepsilon }^{-1} M_{1}\big)\, \tr \big(M_{0,\varepsilon }^{-1} M_{2}\big)\, \det M_{0,\varepsilon }\ -\ 
\tr \big(M_{0,\varepsilon }^{-1} M_{1}M_{0,\varepsilon }^{-1} M_{2} \big) \, \det M_{0,\varepsilon }
\right].
\end{align*}
The problem boils down to computing $M_{0,\varepsilon }^{-1}$ and the taces above in the limit $\varepsilon \to 0+.$\medskip 

Using the block structure of $M_{0,\varepsilon }$ and setting $d:= |c|^{2}+q_{1}\bar q_{2}$ we compute
\[
\det M_{0,\varepsilon }= (d \bar d )^{2} \varepsilon^{2} (1+\varepsilon )^{2},
\]
and
\[
M_{0,\varepsilon }^{-1}= \begin{pmatrix}
A_{0,\varepsilon }^{-1}& - A_{0,\varepsilon }^{-1}C_{0}B_{0,\varepsilon }^{-1}\\
0 & B_{0,\varepsilon }^{-1}\\
\end{pmatrix}=
\begin{pmatrix}
(d)^{-1}(\alpha_{0}+\hat{\varepsilon})^{-1}& -\frac{1}{\varepsilon^{2} } (d\bar d)^{-1} C_{0}\\
0 & (\bar d)^{-1} (E_{22}+\hat{\varepsilon} )^{-1}\\
\end{pmatrix},
\]
with
\[
(\alpha_{0}+\hat{\varepsilon} )^{-1}= \tfrac{1}{\varepsilon (1+\varepsilon )}
\begin{pmatrix}
\nu_{1}\nu_{2}+\varepsilon & \lambda \bar  \nu_{2}\sqrt{1-\nu_{1}\nu_{2}} \\
 \lambda  \nu_{2}\sqrt{1-\nu_{1}\nu_{2}}  & 1-\nu_{1}\nu_{2}+\varepsilon \\
\end{pmatrix},\quad
(E_{22}+\hat{\varepsilon} )^{-1}=\begin{pmatrix}
\frac{1}{\varepsilon }&0 \\
0  & \frac{1}{1+\varepsilon }\\
\end{pmatrix}.
\]
Inserting these formulas in the traces above we obtain
\begin{align*}
&\mathrm{tr} (M_{0,\varepsilon }^{-1}M_{1})= -\frac{1}{\varepsilon d} (\nu_{1}c+\nu_{2}\bar c),\quad
\mathrm{tr} (M_{0,\varepsilon }^{-1}M_{2})=- \frac{1}{\varepsilon \bar d} (\bar \nu_{1}\bar  c+\bar \nu_{2} c)\\
&\mathrm{tr} (M_{0,\varepsilon }^{-1}M_{1}M_{0,\varepsilon }^{-1}M_{2})=
-\frac{|c|^{2} (1-\nu_{1}\nu_{2}) }{\varepsilon^{2} (1+\varepsilon )(\bar d d) }
\left[\frac{\bar q_{2}}{d} (q_{2}+q_{1}\lambda^{-2}) + \frac{\bar q_{1}}{\bar d} (q_{2}\lambda^{2}+q_{1}) \right]+ O \left(\frac{1}{\varepsilon } \right).
\end{align*}
Hence 
\begin{align*}
& \lim_{\varepsilon  \to 0+}   (\partial_{w} \partial_{\bar w} \det M'_{\varepsilon } (w,\bar w  ))_{|w=\bar w=0}=
( |c|^{2}+q_{1}\bar q_{2})\ ( |c|^{2}+\bar q_{1} q_{2}) \  (\nu_{1}c+\nu_{2}\bar c) (\bar \nu_{1}\bar  c+\bar \nu_{2} c) \\
&\qquad
 +|c|^{2} (1-\nu_{1}\nu_{2})|q_{1}|^{2}|q_{2}|^{2}\left[ \left(\frac{1}{\lambda^{2}}+\lambda^{2} \right) +
\left(\frac{\bar q_{1}q_{2}}{|q_{1}|^{2}}+\frac{ q_{1}\bar q_{2}}{|q_{2}|^{2}} \right)
\right]\\
&\qquad
 +|c|^{4} (1-\nu_{1}\nu_{2})\left[ |q_{1}|^{2}+ |q_{2}|^{2}+\bar q_{1}q_{2} \lambda^{2}+\frac{ q_{1}\bar q_{2}}{\lambda^{2}} \right].
\end{align*}
The result now follows by inserting the definition for $\lambda$ and reorganizing the sum in terms of powers of $c.$

\end{proof}

\begin{proof}[Proof of  Corollary \ref{Cor2.2}]
Our goal is to compute, for $p_{2}=1-p_{1},$
\[
\mathbb{E}[\Pi_{N} (0,p_{1},p_{2})]=
\int_{\mathbb{C}^{2N}}  \mathbb{E}\left[\mathcal{P}_{2N}(0,   ( {\bf v}_1,{\bf v}_2)\right]
\delta(p_1-{\bf v}_1^*{\bf v}_1)\delta(p_2-{\bf v}_2^*{\bf v}_2)\ d{\bf v}_1^*d{\bf v}_1d{\bf v}_2^*d{\bf v}_2.
\]
 Inserting formulas \eqref{final_finite_size} and
\eqref{parallel_explicit}, we have
\begin{align}
&\mathbb{E}[\Pi_{N} (0,p_{1},p_{2})]=  \frac{1}{\pi^{2N+1}}\tfrac{e^{-|c|^{2}  \left(\frac{p_{1}}{p_{2}}+\frac{p_{2}}{p_{1}} \right)}}{p_{1}^{N}p_{2}^{N}}
\int_{\mathbb{C}^{2}} e^{-q_1\overline{q}_1-q_2\overline{q}_2}\left(q_1\overline{q}_2+|c|^2\right)^{N-2}\left(q_2\overline{q}_1+|c|^2\right)^{N-2}\, \nonumber\\
&
\Big [ \left( \mathcal{K}_{0}- \frac{\mathcal{K}_{1}}{p_{1}p_{2}} \right) 
 |c|^{2} \left(q_{1}\bar q_{2} |q_{1}|^{2} +q_{2}\bar q_{1} |q_{2}|^{2}+
|c|^{2} (|q_{1}|^{2}+|q_{2}|^{2})\right) \nonumber\\
&
+ \frac{\mathcal{K}_{2}}{p_{1}p_{2}}  \, \left(  |q_{1}|^{2}  |q_{2}|^{2}  + |c|^{2} (1+|c|^{2}) (q_{1} \bar q_{2} +q_{2} \bar q_{1} )
\right) + \mathcal{K}_{1}\ |c|^{4} \left( \frac{q_{1}\bar q_{2}+ |c|^{2}}{p_{1}^{2}} +
 \frac{q_{2}\bar q_{1}+ |c|^{2} }{p_{2}^{2}}\right) \nonumber \\
&+\mathcal{K}_{0}\ |c|^{2}\left(  \left(\frac{p_{1}}{p_{2}}+\frac{p_{2}}{p_{1}} \right) |q_{1}|^{2}|q_{2}|^{2} + |c|^{2}
\left(\frac{p_{2}}{p_{1}}  q_{2}\bar q_{1}+\frac{p_{1}}{p_{2}} q_{1}\bar q_{2}\right)
\right)\Big]
\tfrac{dq_1d\overline{q}_1dq_2d\overline{q}_2}{\pi^2},\label{eq:coroll-proof}
\end{align}
where
\begin{align*}
 \mathcal{K}_{0}&=
 \int_{\mathbb{C}^{2N}} \delta(p_1-{\bf v}_1^*{\bf v}_1)\delta(p_2-{\bf v}_2^*{\bf v}_2)\ d{\bf v}_1^*d{\bf v}_1d{\bf v}_2^*d{\bf v}_2\\
 \mathcal{K}_{1}&=
 \int_{\mathbb{C}^{2N}} ( {\bf v}_1^*{\bf v}_2) ( {\bf v}_2^*{\bf v}_1)\,  \delta(p_1-{\bf v}_1^*{\bf v}_1)\delta(p_2-{\bf v}_2^*{\bf v}_2)\ d{\bf v}_1^*d{\bf v}_1d{\bf v}_2^*d{\bf v}_2\\
\mathcal{K}_{2}&=
 \int_{\mathbb{C}^{2N}} (c^{2} ( {\bf v}_1^*{\bf v}_2)^{2}+\bar c^{2} ( {\bf v}_2^*{\bf v}_1 )^{2})  \delta(p_1-{\bf v}_1^*{\bf v}_1)\delta(p_2-{\bf v}_2^*{\bf v}_2)\ d{\bf v}_1^*d{\bf v}_1d{\bf v}_2^*d{\bf v}_2.
\end{align*}
A systematic method to evaluate integrals of such type, in a more general setting,
was developed in \cite{YF2002}. Here the formulas are simple enough to do a direct computation.
For any continuous function
$f =f(u,\bar u),$ with  $u\in \mathbb{C},$ it holds
\begin{align*}
&\int _{\mathbb{C}^{2N}} f ({\bf v}_1^*{\bf v}_2,  {\bf v}_2^*{\bf v}_1)
\delta(p_1-{\bf v}_1^*{\bf v}_1)\delta(p_2-{\bf v}_2^*{\bf v}_2)\ d{\bf v}_1^*d{\bf v}_1d{\bf v}_2^*d{\bf v}_2\\
&=\quad \lim_{\varepsilon \to 0+}
\int _{\mathbb{C}^{2N}} f ({\bf v}_1^*{\bf v}_2,  {\bf v}_2^*{\bf v}_1)e^{-\varepsilon ({\bf v}_1^*{\bf v}_1+{\bf v}_2^*{\bf v}_2)}
\delta(p_1-{\bf v}_1^*{\bf v}_1)\delta(p_2-{\bf v}_2^*{\bf v}_2)\ d{\bf v}_1^*d{\bf v}_1d{\bf v}_2^*d{\bf v}_2
\\
&= \quad  \lim_{\varepsilon \to 0+}\int_{\mathbb{R}^{2}}  e^{itp_{1}+isp_{2}}
\int _{\mathbb{C}^{2N}} f ( {\bf v}_1^*{\bf v}_2,  {\bf v}_2^*{\bf v}_1) e^{- (\varepsilon +it) {\bf v}_1^*{\bf v}_1 }
e^{- (\varepsilon +is) {\bf v}_2^*{\bf v}_2 }d{\bf v}_1^*d{\bf v}_1d{\bf v}_2^*d{\bf v}_2 \frac{dt ds}{(2\pi)^{2}}.
\end{align*}
Inserting this formula in the integrals above we obtain $\mathcal{K}_{2}=0$ and
\begin{align*}
\mathcal{K}_{0}&=  \pi^{2N} \lim_{\varepsilon \to 0+}\int_{\mathbb{R}^{2}} \frac{e^{itp_{1}}}{(\varepsilon +it)^{N}}  \frac{e^{isp_{2}}}{(\varepsilon +is)^{N}} \frac{dt ds}{(2\pi)^{2}}=
\frac{\pi^{2N}}{(N-1)!^{2}} ( p_{1}p_{2})^{N-1}\\
\mathcal{K}_{1}&= N \pi^{2N} \lim_{\varepsilon \to 0+}\int_{\mathbb{R}^{2}} \frac{e^{itp_{1}}}{(\varepsilon +it)^{N+1}}  \frac{e^{isp_{2}}}{(\varepsilon +is)^{N+1}} \frac{dt ds}{(2\pi)^{2}}=
\frac{ \pi^{2N}}{N!(N-1)!} ( p_{1}p_{2})^{N}.
\end{align*}
The claim now follows by inserting these contants in \eqref{eq:coroll-proof} and reorganizing the resulting expression
in powers of $|c|.$
\end{proof} 

\section{Large-$N$ asymptotics}
\subsection{Proof of Theorem \ref{main_asy}}
Our goal is to study the limit $N\to \infty$ of the ratio
\[
\pi_{N} (0,p_{1},p_{2})= \frac{\mathbb{E}[\Pi_{N} (0,p_{1},p_{2})]}{\mathbb{E}[\rho_{N} (0)]}=\frac{\mathbb{E}[\Pi_{N} (0,p_{1},p_{2})]}{\int_{0}^{1} \mathbb{E}[\Pi_{N} (0,p_{1},1-p_{1})] dp_{1}}.\
\]
To this end, we need to extract the leading asymptotic behaviour of the integral in (\ref{final_finite_size1}), assuming $|c|$ to be fixed and letting $N\to \infty$.
It is convenient to introduce the rescaled polar coordinates  $q_{1}=\sqrt{NR_{1}}e^{i\theta_1}, $
$q_{2}=\sqrt{NR_{2}}e^{i\theta_2},$ with $R_1,R_2\ge 0$ and $0\le \theta_1,\theta_2<2\pi$. Note that
$d\bar q_{i} dq_{i}=\frac{1}{2} dR_{i} d\theta_{i}.$
The integrand in (\ref{final_finite_size1})  depends on $R_{1},R_{2}$ and $\theta=\theta_{1}-\theta_{2}$ only, hence we obtain
\begin{align*}
\mathbb{E}[\Pi_{N} (0,p_{1},p_{2})]&
= 
\frac{ N^{2N-2}}{(N-1)!(N-2)!} 
\frac{e^{-|c|^2\left(\frac{p_1}{p_2}+\frac{p_2}{p_1}\right)}}
{p_1p_2}\,\frac{1 }{\pi}\int_{(0,\infty )^{2}}  e^{-N (R_{1}+R_{2})} (R_{1}R_{2})^{N-2}\\
&\qquad \times \int_{0}^{2\pi }\left(1+\tfrac{|c|^2}{N\sqrt{R_1R_2}} e^{i\theta}\right)^{N-2}
\left(1+\tfrac{|c|^2}{N\sqrt{R_1R_2}}e^{-i\theta}\right)^{N-2}\\
&\qquad \qquad \times \mathcal{U}\Big ( NR_1,NR_2,N\sqrt{R_1R_{2}} e^{i\theta },N\sqrt{R_{1}R_2} e^{-i\theta} \Big)\ \frac{d\theta }{2\pi } dR_{1}dR_{2}.
\end{align*}
Extracting the $p_{1},p_{2}$ dependence explicitly we obtain
\begin{align}\label{eq:final_finite_size-scaled}
\mathbb{E}[\Pi_{N} (0,p_{1},p_{2})]&=  \mathcal{N}_N \ \frac{1}{p_{1}p_{2}}e^{-|c|^2\left(\frac{p_1}{p_2}+\frac{p_2}{p_1}\right)} \left[ \mathcal{I}_{0,N}+\mathcal{I}_{1,N} \frac{p_{1}}{p_{2}}+ \mathcal{I}_{2,N} \frac{p_{2}}{p_1} \right]\\
\mathbb{E}[\rho_{N} (0)]&=\mathcal{N}_N\  \left[\mathcal{I}_{0,N} \mathcal{J}_{0}+
( \mathcal{I}_{1,N}+ \mathcal{I}_{2,N}) \mathcal{J}_{1} \right] ,\label{eq:final_finite_size-rho-scaled}
\end{align}
where, for $j=0,1,2,$
\begin{align}\label{eq:defINj}
 &\mathcal{I}_{j,N}=\mathcal{I}_{j,N} (|c|^{2}) :=\frac{N}{2\pi }\int_{(0,\infty )^{2}}  e^{-N (R_{1}-1-\ln R_{1})}e^{-N (R_{2}-1-\ln R_{2})} F_{j,N} (R_{1},R_{2})\, dR_{1} dR_{2},\\
&F_{j,N} (R_{1},R_{2}):=\tfrac{1}{(R_{1}R_{2})^{2}}
\hspace{-0,1cm} \int_{0}^{2\pi }\hspace{-0,2cm}
\left(1+\tfrac{|c|^2}{N\sqrt{R_1R_2}} e^{i\theta}\right)^{N-2}\left(1+\tfrac{|c|^2}{N\sqrt{R_1R_2}}e^{-i\theta}\right)^{N-2}\hspace{-0,2cm} \mathcal{U}_{j,N} (R_{1},R_{2},\theta )\  \frac{d\theta }{2\pi },
\nonumber
\end{align}
and the functions $\mathcal{U}_{j,N}(R_{1},R_{2},\theta ),$ $j=0,1,2,$ are defined by
\begin{align}\label{eq:Ujdef}
& \mathcal{U}_{0,N}:=  |c|^{2} \sqrt{R_{1}R_{2}}   (R_{1}e^{-i\theta }+R_{2}e^{i\theta }) +\frac{1}{N}|c|^{4} (R_{1}+R_{2})\\
&  \mathcal{U}_{1,N}=  |c|^{2} \frac{N}{N-1} R_{1}R_{2}+ |c|^{4} \frac{1}{N-1}  \sqrt{R_{1}R_{2}}  (e^{-i\theta }+e^{i\theta })-\frac{1}{N}|c|^{4}  \sqrt{R_{1}R_{2}}e^{i\theta } +
\frac{|c|^{6}}{N^{2} (N-1)}  \nonumber \\
&  \mathcal{U}_{2,N}=  |c|^{2} \frac{N}{N-1} R_{1}R_{2}+ |c|^{4} \frac{1}{N-1}  \sqrt{R_{1}R_{2}}  (e^{-i\theta }+e^{i\theta }) -\frac{1}{N} |c|^{4}  \sqrt{R_{1}R_{2}}e^{-i\theta }+
\frac{|c|^{6}}{N^{2} (N-1)}. \nonumber
\end{align}
The $N-$independent factors $\mathcal{J}_{i}$ come from the integration over $p_{1}:$ 
\begin{align*}
\mathcal{J}_{0}&=\mathcal{J}_{0}(|c|^{2}) := \int_{(0,1)^{2}}  \frac{1}{p_{1}p_{2}}e^{-|c|^2\left(\frac{p_1}{p_2}+\frac{p_2}{p_1}\right)}\delta (p_{1}+p_{2}-1)dp_{1}dp_{2},\\
\mathcal{J}_{1}&=\mathcal{J}_{1} (|c|^{2})= \frac{1}{2} \int_{(0,1)^{2}}  \frac{1}{p_{1}p_{2}}e^{-|c|^2\left(\frac{p_1}{p_2}+\frac{p_2}{p_1}\right)}\left(\frac{p_{1}}{p_{2}}+\frac{p_{2}}{p_{1}} \right)\delta (p_{1}+p_{2}-1)dp_{1}dp_{2}= -\frac{1}{2} \mathcal{J}_{0}' (|c|^{2}).
\end{align*}
Finally the $N-$dependent prefactor is 
\[
\mathcal{N}_N:=\frac{2 N^{2N-2}e^{-2N}}{(N-1)!(N-2)!},
\]
Note that, by Stirling's formula
\[
\lim_{N\to \infty} \mathcal{N}_N= \frac{1}{\pi }.
\]
\paragraph{\textbf{Claim.}}
 The  expressions in \eqref{eq:final_finite_size-scaled} and \eqref{eq:final_finite_size-rho-scaled} have the following
 asymptotic form, which directly implies the proof of the Theorem:
\begin{align}\label{eq:asymptnum}
&\lim_{N\to \infty}\left[  \mathcal{I}_{0,N}+\mathcal{I}_{1,N} \frac{p_{1}}{p_{2}}+ \mathcal{I}_{2,N} \frac{p_{2}}{p_1}\right]=
2|c|^{2}\left[  I_{1} (2|c|^2)+ \frac{1}{2}\left(  \frac{p_{1}}{p_{2}} +\frac{p_{2}}{p_1} \right)I_{0} (2|c|^2)  \right]\\
&\lim_{N\to \infty} \left[  \mathcal{I}_{0,N} \mathcal{J}_{0}+
( \mathcal{I}_{1,N}+ \mathcal{I}_{2,N}) \mathcal{J}_{1}  \right]= 2,\label{eq:asymptden}
\end{align}
where for $n\in \mathbb{N}_{0},$ $I_{n}$ is the modified Bessel function of the first kind (see \eqref{Bessel_I}).

To prove the claim we derive the 
  asymptotic form of the integrals $\mathcal{I}_{j,N}$  by Laplace method. 
Heuristically, the function $f (R_{j}):= R_{j}-1-\ln R_{j}$ has a unique global minimum in $R_{j}=1$ with $f'' (1)=1,$ hence
\begin{equation}\label{eq:asympt-IN} 
\lim_{N\to \infty} \mathcal{I}_{j,N}= \lim_{N\to \infty} F_{j,N} (1,1)  \int_{\R^{2}} e^{-\frac{1}{2} (x^{2}+y^{2})} \frac{dxdy}{2\pi } =
 \lim_{N\to \infty} F_{j,N} (1,1).
\end{equation}
The rigorous proof of this formula is given in the Appendix.
Using $ \lim_{N\to \infty} \mathcal{U}_{0,N} (1,1,\theta )= 2|c|^{2} \cos \theta, $ and $\lim_{N\to \infty} \mathcal{U}_{j,N} (1,1,\theta )=|c|^{2}$ for $j=1,2,$ and
\begin{equation}\label{exponentiating}
\lim_{N\to \infty}\left(1+\tfrac{|c|^2}{N\sqrt{R_1R_2}}e^{-i\theta}\right)^{N-2}\left(1+\tfrac{|c|^2}{N\sqrt{R_1R_2}}e^{i\theta}\right)^{N-2}=
e^{2|c|^2\cos{\theta}},
\end{equation}  
uniformly in $\theta\in [0,2\pi],$ we finally obtain
\begin{align*}
&\lim_{N\to \infty} \mathcal{I}_{0,N}=  |c|^{2} \int_{0}^{2\pi } e^{2|c|^2\cos{\theta}}2\cos \theta\
\tfrac{d\theta}{2\pi }=2|c|^{2}I_{1} (2|c|^2)
\\
&\lim_{N\to \infty} \mathcal{I}_{1,N}=\lim_{N\to \infty} \mathcal{I}_{2,N}=|c|^{2} \int_{0}^{2\pi } e^{2|c|^2\cos{\theta}} \tfrac{d\theta}{2\pi }=
|c|^{2}I_{0} (2|c|^2),
\end{align*}
which implies the asymptotic formula for the numerator \eqref{eq:asymptnum}.
To prove   \eqref{eq:asymptden} we notice that,
\begin{align*}
&\mathcal{J}_{0}=\int_0^{\infty}\int_0^{\infty}\pi_N(0, p_1,p_2)\delta(p_1+p_2-1)\,dp_1dp_2 \\
&\quad =
2\int_{-\infty}^{\infty}e^{-2|c|^2\cosh{2\alpha}}\int_0^{\infty}\delta(2p\cosh{\alpha}-1)\frac{dp}{p}\,d\alpha=
4\int_{0}^{\infty}e^{-2|c|^2\cosh{2\alpha}}d\alpha=2 K_0(2|c|^2),
\end{align*}
where in the second line we performed the coordinate change  $p_1=pe^{\alpha}, \, p_2=pe^{-\alpha}$
where $p\ge 0$ and $\alpha\in \mathbb{R}$  
and, for $n\in \mathbb{N}_0,$ $K_{n}$ the modified Bessel function of second kind (also known as Macdonald function)
\begin{equation}\label{eq:besselsecond}
K_{n} (x)= \int_{0}^{\infty}e^{-x\cosh{\alpha}}\cosh{n\alpha}\  d\alpha.
\end{equation}
Inserting this in the formulas above we obtain $\mathcal{J}_{1} (|c|^{2})= -2K_{0}' (2|c|^{2})= 2 K_{1} (2|c|^{2})$ and hence
\begin{align*}
&\lim_{N\to \infty} \left[ \mathcal{I}_{0,N} \mathcal{J}_{0}+
( \mathcal{I}_{1,N}+ \mathcal{I}_{2,N}) \mathcal{J}_{1} \right] =
2|c|^{2}\left[  I_{1} (2|c|^{2}) \mathcal{J}_{0} (|c|^{2}) + I_{0} (2|c|^{2}) \mathcal{J}_{1} (|c|^{2}) \right]\\
&= 4|c|^{2}\left[  I_{1} (2|c|^{2}) K_{0} (2|c|^{2}) + I_{0} (2|c|^{2}) K_{1} (2|c|^{2})\right]=2,
\end{align*}
where in the last line we used where we made use of the well-known Wronskian identity satisfied by modified Bessel functions (see e.g. \cite{WronskBessel})
\[
 I_{1} (2|c|^{2}) K_{0} (2|c|^{2}) + I_{0} (2|c|^{2}) K_{1} (2|c|^{2}) = \frac{1}{2|c|^{2}}.
\]

\subsection{Asymptotics assuming the extensive coupling: $c=\sqrt{N}\tilde{c}$}   

Now for completeness we briefly consider the regime when $|c|^2=O(N)$, where the replacement Eq.(\ref{exponentiating}) ceases to be valid.
In such a regime the corresponding factors non-trivially modify the points of maximum of the integrand in variables $R_1$ and $R_2$.
To take this change appropriately 
into account we rewrite the Eq.(\ref{eq:final_finite_size-rho-scaled}) as
\begin{align}
\mathbb{E}[\rho_{N} (0)]&=\mathcal{N}_N\  \left[\tilde{\mathcal{I}}_{0,N} \tilde{\mathcal{J}}_{0,N}+
( \tilde{ \mathcal{I}}_{1,N}+  \tilde{\mathcal{I}}_{2,N})  \tilde{\mathcal{J}}_{1,N} \right] 
\end{align}
where, for $j=0,1$
\[
 \tilde{\mathcal{J}}_{j,N}:=\left(\tfrac{N}{\pi} \right)^{\frac{1}{2}} e^{2N|\tilde{c}|^{2}}\mathcal{J}_{j}(N|\tilde{c}|^{2})=
 \left(\tfrac{N}{\pi} \right)^{\frac{1}{2}} e^{2N|\tilde{c}|^{2}} 2K_{j} (2N|\tilde{c}|^{2}),
\]
and for $j=0,1,2,$
\begin{align}
 &\tilde{\mathcal{I}}_{j,N} :=\tfrac{1}{\sqrt{2}}\left(\tfrac{N}{2\pi} \right)^{\frac{3}{2}}\int_{(0,\infty )^{2}\times [0,2\pi ]}
 e^{-N\mathcal{L}(R_1,R_2,\theta)}
\tfrac{\mathcal{W}_{j,N}(R_1,R_2,\theta)}{\left(R_1R_2+2|\tilde{c}|^2\sqrt{R_1R_2}\cos{\theta}+|c|^4\right)^2} 
dR_{1}dR_{2}d\theta ,
\label{finite_size_rescaled_new}
\end{align}
where
\begin{equation}\label{action}
\mathcal{L}(R_1,R_2,\theta)=R_1+R_2-2 (1-|\tilde{c}|^2) -\ln{\left(R_1R_2+2|\tilde{c}|^2\sqrt{R_1R_2}\cos{\theta}+|\tilde{c}|^4\right)}
\end{equation} 
and the functions $\mathcal{W}_{j,N}(R_{1},R_{2},\theta ),$ $j=0,1,2,$ are defined by
\begin{align}
& \mathcal{W}_{0,N}:=  |\tilde{c}|^{2} \sqrt{R_{1}R_{2}}   (R_{1}e^{-i\theta }+R_{2}e^{i\theta }) +|\tilde{c}|^{4} (R_{1}+R_{2})\nonumber\\
&  \mathcal{W}_{1,N}=  |\tilde{c}|^{2} \frac{N}{N-1} R_{1}R_{2}+ |\tilde{c}|^{4} \frac{N}{N-1}  \sqrt{R_{1}R_{2}}  (e^{-i\theta }+e^{i\theta })-|\tilde{c}|^{4}  \sqrt{R_{1}R_{2}}e^{i\theta } +
\frac{|\tilde{c}|^{6}}{(N-1)}  \nonumber \\
&  \mathcal{W}_{2,N}= \overline{ \mathcal{W}_{1,N}}. \nonumber
\end{align}
Using the asymptotic formula $K_{n} (z)\sim \sqrt{ \frac{\pi }{2z} }e^{-z}$ for $z\gg 1$ we get
\[
\lim_{N\to \infty}  \tilde{\mathcal{J}}_{j,N}= \frac{1}{|\tilde{c}|}.
\]
To study the asymptotics of $  \tilde{\mathcal{I}}_{j,N}$ we use again Laplace method. We distinguish two cases.\vspace{0,2cm}

For $|\tilde{c}|<1$ the function  $\mathcal{L}(R_1,R_2,\theta)$ admits a unique global minimum in $R_{1}=R_{2}=R_{m}:=1-|\tilde{c}|^{2},$
and $\theta=0,$ with Hessian matrix at the minimum given by
\[
\mathcal{L}'' (R_{m},R_{m},0)=\begin{pmatrix}
1+\frac{|\tilde{c}|^{2}}{2R_{m}}& - \frac{|\tilde{c}|^{2}}{2R_{m}}& 0\\
- \frac{|\tilde{c}|^{2}}{2R_{m}}& 1+\frac{|\tilde{c}|^{2}}{2R_{m}}& 0\\
0&0& 2|\tilde{c}|^{2}R_{m}\\
\end{pmatrix}
\]
whose determinant is $\det\mathcal{L}'' (R_{m},R_{m},0) = 2|\tilde{c}|^{2}.$
Hence
\[
\lim_{N\to \infty}  \tilde{\mathcal{I}}_{0,N}= \frac{1}{\sqrt{2}\sqrt{ 2|\tilde{c}|^{2} }} \lim_{N\to \infty} \mathcal{W}_{0,N} (R_{m},R_{m},0)=
\frac{ 2R_{m}|\tilde{c}|^{2} }{2|\tilde{c}|}= |\tilde{c}|R_{m},
\]
and for $j=1,2$
\[
\lim_{N\to \infty}  \tilde{\mathcal{I}}_{j,N}= \frac{1}{\sqrt{2}\sqrt{ 2|\tilde{c}|^{2} }} \lim_{N\to \infty} \mathcal{W}_{j,N} (R_{m},R_{m},0)=\frac{|\tilde{c}|R_{m}}{2}.
\]
 The proof uses the same type of arguments as in the Appendix, but is simpler
since no separate analysis of the region near the boundaries $R_{1}=0, R_{2}=0$ is necessary.
Putting all this together we obtain
\[
p_\infty (0)=\frac{1}{\pi}\left(R_{m}+2\frac{R_{m}}{2}\right)= \frac{2}{\pi} (1-|\tilde{c}|^2)\qquad \forall \,0\le |\tilde{c}|^2<1.
\]
Note that the density vanishes as $|\tilde{c}|\to 1$.

For $|\tilde{c}|>1$ the function  $\mathcal{L}(R_1,R_2,\theta)$ is minimized at the boundary  $R_1=R_2=0.$ We compute
\[
\mathcal{L}_{m}:=\mathcal{L}(0,0,\theta)= 2 ( |\tilde{c}|^{2}-1 -\ln  |\tilde{c}|^{2})>0\quad \forall |\tilde{c}|>1.
\]
Hence, for all $N>2,$
\begin{align*}
&\left| \tilde{\mathcal{I}}_{j,N} \right|\leq  e^{- (N-2)\mathcal{L}_{m} } \int_{(0,\infty )^{2}\times [0,2\pi ]}
 e^{- (N-2)[\mathcal{L}(R_1,R_2,\theta)-\mathcal{L}_{m}]} |Pol (\sqrt{R_{1}},\sqrt{R_{2}}) |
dR_{1}dR_{2}d\theta \\
&\qquad \leq  e^{- (N-2)\mathcal{L}_{m} } \int_{(0,\infty )^{2}\times [0,2\pi ]}
 e^{- [\mathcal{L}(R_1,R_2,\theta)-\mathcal{L}_{m}]} |Pol (\sqrt{R_{1}},\sqrt{R_{2}}) |=   e^{- (N-2)\mathcal{L}_{m} }const,
\end{align*}
which finally yields 
\[
p_\infty (0)=0\qquad \forall |\tilde{c}|>1.
\]
This result has a simple explanation: for $|\tilde{c}|>1$ the limiting density for our ensemble is supported in the complex plane on a
domain consisting of two disconnected pieces, with $z=0$ being outside of the density support. On approaching $|\tilde{c}|=1$ from above,
the two pieces of the support touch each other  at the origin of the complex plane. With further decrease of the parameter $|\tilde{c}|$
the two pieces merge into a single domain which includes the origin, ensuring the nonvanishing value of the mean density of complex
eigenvalues at $z=0$.

\appendix

\section{Asymptotic formulas}
Recall the definition of $\mathcal{I}_j,N$ in \eqref{eq:defINj}.
\begin{align*}
 &\mathcal{I}_{j,N}=\mathcal{I}_{j,N} (|c|^{2}) :=\tfrac{N}{2\pi }\int_{(0,\infty )^{2}}  e^{-N (R_{1}-1-\ln R_{1})}e^{-N (R_{2}-1-\ln R_{2})} F_{j,N} (R_{1},R_{2})\, dR_{1} dR_{2},\\
&F_{j,N} (R_{1},R_{2}):=\tfrac{1}{(R_{1}R_{2})^{2}}
\hspace{-0,1cm} \int_{0}^{2\pi }\hspace{-0,2cm}  \left(1+\tfrac{|c|^2}{N\sqrt{R_1R_2}} e^{i\theta}\right)^{N-2}\left(1+\tfrac{|c|^2}{N\sqrt{R_1R_2}}e^{-i\theta}\right)^{N-2}\hspace{-0,2cm} \mathcal{U}_{j,N} (R_{1},R_{2},\theta )\  \frac{d\theta }{2\pi },
\end{align*}
where $\mathcal{U}_{j,N} (R_{1},R_{2},\theta )$  are defined \eqref{eq:Ujdef}.
The goal of this section is to prove the formula
\begin{equation} 
\lim_{N\to \infty} \mathcal{I}_{j,N}= F_{j,\infty}:=
 \lim_{N\to \infty} F_{j,N} (1,1), \qquad j=0,1,2.
\end{equation}
We expect the integral $\mathcal{I}_{j,N}$ to be concentrated near the global minimum of the function
$R_{1}-1-\ln R_{1}+R_{2}-1-\ln R_{2}$ at $(R_{1},R_{2})= (1,1).$ Let $\delta:= N^{-\alpha }$  with
$\frac{1}{3}<\alpha <\frac{1}{2}.$ We decompose the integration domain as 
\[
(0,\infty )\times (0,\infty ) = \bigcup_{l=0}^{3} D_{l} 
\]
where $D_{0}$ is the region near the minimum $D_{0}:=\{(R_{1},R_{2})\in (1-\delta, 1+\delta  )\times  (1-\delta, 1+\delta  ) \} $
and the region far from the saddle
$D_{0}^{c}= D_{1}\cup D_{2}\cup D_3$ is decomposed as
\begin{align*}
D_{1}&:= \{(R_{1},R_{2})\in  (0,1-\delta]\times (0,1-\delta] \} \\
D_{2}&:=  \{(R_{1},R_{2})\in  (0,1-\delta]\times[1-\delta,\infty ) \} \cup
\{(R_{1},R_{2})\in [1-\delta,\infty ) \times  (0,1-\delta]\}\\
D_{3} &:=  \{(R_{1},R_{2})\in  [1-\delta,\infty )\times[1+\delta,\infty ) \}\cup 
 \{(R_{1},R_{2})\in [1+\delta,\infty )\times  [1-\delta,\infty ) \}.
\end{align*}
These regions distinguish the cases when one of the two variables is near to the boundary $R_{j}=0.$
The asymptotics \eqref{eq:asympt-IN}  follows directly from the following lemmas.

\begin{lemma}\label{Le:A1} There exists a $p\in \mathbb{N}$  and a constant $C_{0}>0$ such that
\[
\sup_{l=1,2,3}\sup_{j=0,1,2} \tfrac{N}{2\pi }\int_{D_{l}}  e^{-N (R_{1}-1-\ln R_{1})}e^{-N (R_{2}-1-\ln R_{2})} \left| F_{j,N} (R_{1},R_{2})\right|\, dR_{1} dR_{2}  \ \leq \ N^{p} e^{-\frac{\delta^{2}N}{2}}\\
\]
\end{lemma}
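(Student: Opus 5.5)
The plan is to bound the integrand pointwise by a product of two one-variable functions, each of the form $e^{-Ng(\cdot)}$ with $g(t):=t-1-\ln t$, times polynomial factors. The estimate on each $D_l$ then reduces to elementary one-dimensional bounds on $g$ away from its minimum at $t=1$.

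\emph{Step 1 (pointwise bound).} Write $e^{-N(R_i-1-\ln R_i)}=R_i^{N}e^{-N(R_i-1)}$ and combine it with the factor $(R_1R_2)^{-2}$ and the two $(N-2)$-th powers in $F_{j,N}$. Setting $\varepsilon:=|c|^2/N$, the $\theta$-integrand is then bounded in modulus by
\[
e^{-N(R_1+R_2-2)}\,\big|\sqrt{R_1R_2}+\varepsilon e^{i\theta}\big|^{2(N-2)}\,|\mathcal{U}_{j,N}(R_1,R_2,\theta)|.
\]
By \eqref{eq:Ujdef}, $|\mathcal{U}_{j,N}|\le C(1+R_1+R_2)^{2}$ uniformly in $\theta$ and $N$. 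The key observation is the Cauchy--Schwarz inequality
\[
\big(\sqrt{R_1}\sqrt{R_2}+\sqrt{\varepsilon}\sqrt{\varepsilon}\big)^2\le (R_1+\varepsilon)(R_2+\varepsilon),
\]
which factorizes the bound. Since $(R+\varepsilon)^{N}e^{-N(R-1)}=e^{-Ng(R+\varepsilon)}e^{N\varepsilon}$ and $e^{N\varepsilon}=e^{|c|^2}$, the integrand of $\frac{N}{2\pi}\int_{D_l}\ldots$ is bounded by
\[
C N\,e^{2|c|^2}\,(1+R_1+R_2)^2\prod_{i=1,2}\frac{e^{-Ng(R_i+\varepsilon)}}{(R_i+\varepsilon)^{2}} .
\]
Near the boundary $R_i=0$ we use $(R_i+\varepsilon)^{-2}\le \varepsilon^{-2}=N^2/|c|^4$. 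This polynomial loss is what forces the factor $N^p$ in the statement. It also removes the apparent singularity of $F_{j,N}$ at $R_i\to0$.

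\emph{Step 2 (one-dimensional estimates).} We need two elementary lower bounds on $g$:
\[
g(1-x)=-x-\ln(1-x)\ge \tfrac{x^2}{2}\quad (0\le x<1),\qquad g(1+x)=x-\ln(1+x)\ge \tfrac{x^2}{2}-\tfrac{x^3}{3}\quad (x\ge0),
\]
together with the monotonicity of $g$ on either side of $1$.

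On $D_1$ and $D_2$ at least one coordinate satisfies $R_i\le 1-\delta$, hence $R_i+\varepsilon\le 1-(\delta-\varepsilon)$. Its factor is then bounded by $N^2e^{-N(\delta-\varepsilon)^2/2}$, and $N(\delta-\varepsilon)^2/2=N\delta^2/2-|c|^2\delta+O(N^{-1})$, so the correction is bounded. On $D_3$ one coordinate satisfies $R_i\ge 1+\delta$, and we get $e^{-N(\delta^2/2-\delta^3/3)}$. Here $N\delta^3=N^{1-3\alpha}\to0$ precisely because $\alpha>1/3$; this is where the lower constraint on $\alpha$ enters.

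The remaining coordinate is integrated over $(0,\infty)$. Its factor $\int_0^\infty (1+R)^2(R+\varepsilon)^{-2}e^{-Ng(R+\varepsilon)}\,dR$ is at most polynomial in $N$: it is $O(N^2)$ near $0$, and the large-$R$ tail is killed by $e^{-Ng(R)}\le e^{-N(R-1)/2}$ for large $R$. The factor $(1+R_1+R_2)^2$ is split as $\le 2(1+R_1)^2(1+R_2)^2$ so that it factorizes as well. Collecting the polynomial prefactors and the bounded exponentials $e^{2|c|^2}$ and $e^{|c|^2\delta}$ into $N^p$ (for $N$ large, with any leftover constant absorbed as $C_0$), we obtain the bound uniformly in $l=1,2,3$ and $j=0,1,2$.

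\emph{Main obstacle.} The delicate point is Step 1. A naive bound using $\sqrt{R_1R_2}\le(R_1+R_2)/2$ also factorizes the exponent, but only in the averaged variable $s=(R_1+R_2)/2$. Since a single coordinate below $1-\delta$ only forces $s\le 1-\delta/2$, this loses a factor $4$ in the exponent and gives $e^{-N\delta^2/8}$ instead of $e^{-N\delta^2/2}$. The Cauchy--Schwarz factorization avoids this loss and simultaneously controls the region near $R_i=0$, so I would secure it first. After that, only bookkeeping of polynomial factors remains.
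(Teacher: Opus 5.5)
Your proof is correct, and it takes a genuinely different route from the paper's. The paper handles the three regions with separate ad hoc bounds:
\begin{itemize}
\item on $D_1$ it studies the two-variable function $A(x,y)=x^2+y^2-2-2\ln(xy+|c|^2/N)$ and shows its minimum over the square is at the corner;
\item on $D_2$ it decouples the variables by bounding $R_2$ below by $1-\delta$ inside the factor $1+|c|^2/(N\sqrt{R_1R_2})$;
\item on $D_3$ it bounds that factor by the constant $e^{2|c|^2+o(1)}$, and controls the unbounded tail by splitting $e^{-Ng}=e^{-Ng/2}e^{-Ng/2}$.
\end{itemize}
Your inequality $(\sqrt{R_1R_2}+\varepsilon)^2\le(R_1+\varepsilon)(R_2+\varepsilon)$ instead factorizes the integrand once, on the whole quadrant, at the cost of the harmless shift $R\mapsto R+\varepsilon$. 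Every region then reduces to the monotonicity of the single function $g$, and the two-dimensional minimization on $D_1$ is no longer needed.

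Beyond economy, your route gives a sharper constant. The paper's argument as written yields only $e^{-N\delta^2/4}$ on $D_2$ and $D_3$. That suffices for Lemma~\ref{Le:A2}, since $N\delta^2=N^{1-2\alpha}\to\infty$, but it is weaker than the displayed $e^{-\delta^2N/2}$. Your bounds recover the stated exponent $1/2$, up to bounded factors such as $e^{|c|^2\delta}$ and $e^{N\delta^3/3}$.

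One step still needs to be written out. On $D_3$ the coordinate with $R_i\ge 1+\delta$ ranges over an unbounded set. The pointwise bound $e^{-Ng(R_i+\varepsilon)}\le e^{-Ng(1+\delta)}$ alone does not make $\int(1+R_i)^2\,dR_i$ converge. Use instead $e^{-Ng(t)}\le e^{-(N-1)g(1+\delta)}e^{-g(t)}$ for $t\ge 1+\delta$. This keeps the exponent $1/2$ at the cost of a factor $e^{\delta^2/2}$.

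Finally, your side remark understates how badly the naive AM--GM bound fails. On the part of $D_2$ where $R_1+R_2=2$ it gives no decay at all. That is exactly why a factorized bound like yours is needed.
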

\begin{lemma}\label{Le:A2}  For $N\gg 1 $ it holds
\[
 \tfrac{N}{2\pi }\int_{D_{0}}  e^{-N (R_{1}-1-\ln R_{1})}e^{-N (R_{2}-1-\ln R_{2})}  F_{j,N} (R_{1},R_{2})\, dR_{1} dR_{2} =
(F_{j,\infty}+O(N^{-1}) (1+ O (N\delta^{3})+O (\delta )),
\]
for all $j=0,1,2.$
\end{lemma}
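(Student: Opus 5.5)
We prove Lemma~\ref{Le:A2} by a standard Laplace expansion around $(R_1,R_2)=(1,1)$, treating the $\theta$-integral inside $F_{j,N}$ as a smooth, uniformly controlled prefactor. First we establish uniform regularity of $F_{j,N}$ on $D_0$.

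For $(R_1,R_2)\in D_0$ we have $R_1R_2\in[(1-\delta)^2,(1+\delta)^2]$, bounded away from zero. We write
\[
\Bigl(1+\tfrac{|c|^2}{N\sqrt{R_1R_2}}e^{\pm i\theta}\Bigr)^{N-2}=\exp\Bigl\{(N-2)\ln\Bigl(1+\tfrac{|c|^2}{N\sqrt{R_1R_2}}e^{\pm i\theta}\Bigr)\Bigr\},
\]
and expand the logarithm. This gives
\[
\Bigl(1+\tfrac{|c|^2}{N\sqrt{R_1R_2}}e^{ i\theta}\Bigr)^{N-2}\Bigl(1+\tfrac{|c|^2}{N\sqrt{R_1R_2}}e^{- i\theta}\Bigr)^{N-2}=e^{2|c|^2\cos\theta/\sqrt{R_1R_2}}\,(1+O(N^{-1})),
\]
uniformly in $\theta$ and in $(R_1,R_2)\in D_0$, since $|c|$ is fixed. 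The functions $\mathcal{U}_{j,N}$ in \eqref{eq:Ujdef} equal their $N=\infty$ limits up to $O(N^{-1})$ corrections, uniformly on $D_0\times[0,2\pi]$, and those limits are polynomials in $\sqrt{R_1},\sqrt{R_2},e^{\pm i\theta}$. Hence
\[
F_{j,N}(R_1,R_2)=G_j(R_1,R_2)+O(N^{-1}),
\]
where $G_j$ is smooth near $(1,1)$ with $G_j(1,1)=F_{j,\infty}$. By Taylor's theorem, $|G_j(R_1,R_2)-G_j(1,1)|\le C\delta$ on $D_0$. We can sharpen this by keeping the linear term: it integrates to zero against the symmetric Gaussian up to cubic corrections. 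The cruder bound $O(\delta)$ already matches the statement.

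Next we expand the exponent. Set $R_i=1+x_i$ with $|x_i|<\delta$. Then
\[
N(R_i-1-\ln R_i)=\tfrac N2x_i^2+O(N|x_i|^3)=\tfrac N2 x_i^2+O(N\delta^3).
\]
Because $\alpha>1/3$, we have $N\delta^3\to0$, so $e^{O(N\delta^3)}=1+O(N\delta^3)$ uniformly on $D_0$.

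The integral over $D_0$ therefore equals
\[
(F_{j,\infty}+O(N^{-1})+O(\delta))\,(1+O(N\delta^3))\,\tfrac{N}{2\pi}\int_{(-\delta,\delta)^2}e^{-\frac N2(x_1^2+x_2^2)}dx_1dx_2 .
\]
We rescale by $x_i=y_i/\sqrt N$. The integration domain becomes $|y_i|<N^{1/2-\alpha}\to\infty$ because $\alpha<1/2$. The Gaussian integral then equals $1-O(e^{-N\delta^2/2})$, and this error is absorbed into the stated errors.

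The main subtlety is the uniformity of the replacement of the $(N-2)$-th powers by the exponential. Near $D_0$ this is harmless because $\sqrt{R_1R_2}$ is bounded below. The genuinely delicate regime, $R_i\to0$, where these factors blow up like $(|c|^2/(N\sqrt{R_1R_2}))^{N}$, lies entirely in $D_1\cup D_2$ and is handled by Lemma~\ref{Le:A1}, not here.

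A minor point is bookkeeping of the $O(\delta)$ error, since $F_{j,\infty}$ may vanish in principle; the errors are therefore stated additively, exactly as in the lemma.
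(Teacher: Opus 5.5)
Your proposal is correct and follows essentially the same route as the paper: a local Laplace expansion on $D_0$ in which the $(N-2)$-th powers are replaced by $e^{2|c|^2\cos\theta}$ up to $O(N^{-1}+\delta)$, the exponent by $\frac{N}{2}(x_1^2+x_2^2)$ up to $O(N\delta^3)$, and $\mathcal{U}_{j,N}$ by its value at $(1,1)$ up to $O(N^{-1}+\delta)$, after which the truncated Gaussian integral is evaluated. Your smooth function $G_j$ and your explicit control of the Gaussian tail outside $(-\delta,\delta)^2$ only spell out steps the paper leaves implicit. Also, $F_{j,\infty}>0$ here (it equals $2|c|^2I_1(2|c|^2)$ for $j=0$ and $|c|^2I_0(2|c|^2)$ for $j=1,2$), so stating the $O(\delta)$ error additively or multiplicatively makes no difference.
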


\begin{proof}[Proof f Lemma \ref{Le:A1}]
We can decompose $\mathcal{U}_{j,N}$ as $\mathcal{U}_{j,N}= \mathcal{U}_{j}+ \frac{1}{N} Q_{j},$ where
\begin{equation}\label{Udecomp1}
 \mathcal{U}_{0}=|c|^{2}\sqrt{R_{1}R_{2}}  \left(R_{1}e^{i\theta}  +R_{2} e^{-i\theta}  \right),\qquad
  \mathcal{U}_{1}=\mathcal{U}_{2}= |c|^{2} R_{1}R_{2}
  \end{equation}
and, for some constant $C'>0,$
\begin{equation}\label{Udecomp2}
\sup_{\theta \in [0,2\pi ]}\sup_{j=0,1,2}|Q_{j}|\leq  C' \left( (\sqrt{R_{1}} +\sqrt{R_{2}})^{2}+\frac{1}{N} \right).
\end{equation}
Therefore, using $|(a+b)|^{N}\leq ( |a|+|b|)^{N},$ the function $F_{j,N}$ is bounded by
\begin{align*}
|F_{j,N} (R_{1},R_{2})|\leq  \frac{1}{(R_{1}R_{2})^{2}}\left(1+\frac{|c|^2}{N\sqrt{R_1R_2}}\right)^{2 (N-2) }
|Pol (R_{1}) Pol ( R_{2})|.
\end{align*}
We argue in the three regions $D_{1},D_{2},D_{3}$ separately.\vspace{0,2cm}

\textit{Region $D_{1}.$} Here both variables are near zero $R_{1},R_{2}\leq 1-\delta$ and in particular bounded.
Therefore it is enough to bound an integral of the form
\begin{align*}
&\int_{(0,1-\delta )^{2}} e^{-N(R_1-1)-N(R_2-1) } (R_{1}R_{2})^{N-2} \left( 1+\frac{|c|^2}{N\sqrt{R_1R_2}} \right)^{2 (N-2)}dR_{1}dR_{2}\\
&\qquad \int_{(0,1-\delta )^{2}} e^{-N(R_1-1)-N(R_2-1) }  e^{-N(R_1-1)-N(R_2-1) }  \left(\sqrt{R_1R_2} +\frac{|c|^2}{N} \right)^{2 (N-2)}dR_{1}dR_{2}\\
&\qquad \qquad \leq \frac{N^{4}}{|c|^{8}}\int_{(0,1-\delta )^{2}}\, e^{-NA (\sqrt{R_1},\sqrt{R_2})}dR_{1}dR_{2}
\end{align*}
where, for $x,y>0,$ 
\[
A (x,y):= x^{2}-1+y^{2}-1-2 \ln\left[  xy+\frac{|c|^{2}}{N} \right].
\]
This function hat a unique global minimum at the point $x=y=\sqrt{1-\frac{|c|^{2}}{N}}.$
Since $\delta \gg \frac{1}{N},$ this point is outside the integration domain $D_{1},$
and hence in $D_{1}$ the minimum of $A$ is achieved at the boundary.
Direct inspection shows the minimum is achieved at the point $x=y=\sqrt{1-\delta}.$
At this point we compute
\begin{align*}
NA (\sqrt{1-\delta},\sqrt{1-\delta} )&=N\left[ -2\delta - \ln\left ( 1-\delta +\frac{|c|^{2}}{N} \right)  \right] =
N\left[\delta^{2}+ o (\delta^{2}) \right]\geq  N\delta^{2}+O (1)
\end{align*}
where we used $\delta^{2}=N^{-2\alpha }\gg N^{-1}$ since $\alpha <\frac{1}{2}.$
The result follows since $D_{1}$ is compact.\vspace{0,2cm}

\textit{Region $D_{2}.$} Here one variable (say $R_{1}$) is near zero $R_{1}\leq 1-\delta $ and one variable (say $R_{2}$)
is far from zero $R_{2}\geq 1-\delta.$

Since $R_{2}>1-\delta $ and $R_{1}$ is bounded it is sufficient to consider an integral of the form
\[
\int_{(0,1-\delta )}   e^{-N(R_1-1)}   R_{1}^{N-2} \left( 1+\frac{|c|^2}{N\sqrt{R_1 (1-\delta )}} \right)^{2 (N-2)}
\int_{(1-\delta,\infty) } e^{-N(R_2-1-\ln R_{2})} |Pol (R_{2})| dR_{2} dR_{1}.
\]
We argue, since $R_2-1-\ln R_{2}\geq 0,$
\[
\int_{1-\delta}^{\infty } e^{-N(R_2-1-\ln R_{2})} |Pol (R_{2})| dR_{2}\leq \int_{\frac{1}{2}}^{\infty } e^{-(R_2-1-\ln R_{2})} |Pol (R_{2})| dR_{2}=C',
\]
for some $C'>0$ independent from $N.$
Hence, we are left with estimating
\[
\int_{(0,1-\delta )}   e^{-N(R_1-1)}   R_{1}^{N-2} \left( 1+\frac{|c|^2}{N\sqrt{R_1 (1-\delta )}} \right)^{2 (N-2)}
\leq \frac{N^{4}}{|c|^{8}} \int_{(0,1-\delta )}   e^{-N A (\sqrt{R_{1}})} dR_{1}.
\]
where we defined for $x>0$
\[
A (x):= x^{2}-1-2 \ln\left[  x+\frac{|c|^{2}}{N (1-\delta )} \right].
\]
This function hat the global minimum at the point $x =1-\frac{|c|}{N}+o (N^{-1}) >1-\delta.$ Hence the minimum is achieved at the boundary $x=\sqrt{1-\delta}:$
\[
A (x)\geq  A (\sqrt{1-\delta})= -\delta -2   \ln\left[ (\sqrt{1-\delta})+\frac{|c|^{2}}{N (1-\delta )} \right]
=\frac{\delta^{2}}{4} +o (\delta^{2})
\]
Hence $NA (x)\geq  NA (\sqrt{1-\delta})= \frac{1}{4}N\delta^{2}+O (1) .$
The result follows since $[0,1-\delta ]$ is compact.\vspace{0,2cm}

\textit{Region $D_{3}.$} Here both variables are far from zero ($R_{1},R_{2}\geq 1-\delta $) and at least one, say $R_{1},$
is far from the saddle $R_{1}\geq 1+\delta$.
Since $R_{1},R_{2}>1-\delta $ we have
\[
|F_{j,N} (R_{1},R_{2}) |\leq \tfrac{1}{(1-\delta)^{4}}\left(1+\tfrac{|c|^{2}}{N (1-\delta)^{2}} \right)^{(2N-4)}\hspace{-0,2cm} |Pol (R_{1})Pol (R_{2})  |=C' e^{2|c|^{2}+o (1)}|Pol (R_{1})Pol (R_{2})  |
\]
so we have to estimate an integral of the form 
\[
     \int_{1+\delta }^{\infty} e^{-N(R_1-1-\ln R_{1})} Pol (R_{1})  dR_1\,  \int_{1-\delta }^{\infty  } e^{-N(R_2-1-\ln R_{2})} Pol (R_{2})dR_2.
\]
Since $(R_2-1-\ln R_{2}\geq 0$ we can bound 
\[
 \int_{1-\delta  }^{\infty} e^{-N(R_2-1-\ln R_{2})} |Pol (R_{2})|  dR_2\leq
 \int_{\frac{1}{2} }^{\infty} e^{-(R_2-1-\ln R_{2})}  |Pol (R_{2})|  dR_2 =C',
\]
for some constant $C'>0$ independent of $N.$
Finally, to estimate the integral over $R_{1}$ we notice that
\[
\min_{R_{1}\geq 1+\delta }\left[R_{1}-1-\ln R_{1} \right]= (1+\delta )-1-\ln  (1+\delta )=\frac{\delta^{2}}{2}+o (\delta^{2}).
\]
Hence
\begin{align*}
& \int_{1+\delta }^{\infty  }dR_1\, e^{-N(R_1-1-\ln R_{1})}| Pol (R_{1})|= \int_{1+\delta }^{\infty  }dR_1\, e^{-\frac{N}{2}(R_1-1-\ln R_{1})} e^{-\frac{N}{2}(R_1-1-\ln R_{1})}| Pol (R_{1})|\\
&
\leq  e^{-\frac{N}{2}(\delta -\ln (1+\delta))} \int_{1+\delta }^{\infty  }dR_1\,  e^{-\frac{N}{2}(R_1-1-\ln R_{1})}| Pol (R_{1})|\\
&\leq
  e^{-\frac{N}{2}(\delta -\ln (1+\delta))} \int_{1/2 }^{\infty  }dR_1\,  e^{-(R_1-1-\ln R_{1})}| Pol (R_{1})|= C' \,
  e^{-\frac{N\delta^{2}}{4}  +o (1)}.
\end{align*}
This completes the proof of the lemma 
\end{proof}
\begin{proof}[Proof of Lemma \ref{Le:A2}]
 Since both variables are in the interval $[1-\delta,1+\delta ]$ and 
 $\delta\ll 1,$ we can expand all functions around $R_{1},R_{2}=1:$
\[
\frac{1}{(R_{1}R_{2})^{2}}\left(1+\tfrac{|c|^2}{N\sqrt{R_1R_2}} e^{i\theta}\right)^{N-2}\left(1+\tfrac{|c|^2}{N\sqrt{R_1R_2}}e^{-i\theta}\right)^{N-2}=
 e^{2|c|^2\cos\theta} e^{O (N^{-1}+\delta )}
\]
\[
N(R_j-1-\ln R_{j})= \frac{N}{2} (R_{j}-1)^{2}+O (N\delta^{3}). 
\]
Note that $N\delta^{3}\to 0$ as $N\to \infty$ since $\alpha >\frac{1}{3}.$
Finally, using \eqref{Udecomp1} and \eqref{Udecomp2}, we argue $\mathcal{U}_{j,N}=\mathcal{U}_{j}+O (N^{-1}) $ and
\[
 \mathcal{U}_{0}= |c|^{2} 2 \cos \theta +O (\delta ),\qquad  \mathcal{U}_{1}=\mathcal{U}_{2}= |c|^{2}+O (\delta ).
\]
Replacing $R_{j}-1$ with $x_{j}$ we get
\begin{align*}
&\tfrac{N}{2\pi }\int_{(1-\delta ,1+\delta  )^{2}}  e^{-N (R_{1}-1-\ln R_{1})}e^{-N (R_{2}-1-\ln R_{2})} F_{0,N} (R_{1},R_{2})\, dR_{1} dR_{2}\\
&= e^{O (\delta^{3}N)+O (\delta )+O (N^{-1})}
\tfrac{N}{2\pi }\int_{(-\delta ,+\delta  )^{2}}  e^{- \frac{N}{2} ( x_{1}^{2}+x_{2}^{2})}  dx_{1} dx_{2}
 \int_{0}^{2\pi }  e^{2|c|^2\cos\theta} \left[  |c|^{2} 2 \cos \theta +O (\delta )  \right]  \tfrac{d\theta }{2\pi } \\
&= e^{O (\delta^{3} N)+ O (\delta )+O (N^{-1})}  \left[ 2|c|^{2}I_{1} (2|c|^2)+O (\delta )  \right]. 
\end{align*}
Similarly, for $j=1,2$
\begin{align*}
&\tfrac{N}{2\pi }\int_{(1-\delta ,1+\delta  )^{2}}  e^{-N (R_{1}-1-\ln R_{1})}e^{-N (R_{2}-1-\ln R_{2})} F_{1,N} (R_{1},R_{2})\, dR_{1} dR_{2}\\
&\qquad  =  e^{O (\delta^{3}N)+O (\delta )+O (N^{-1})}  \left[ |c|^{2}I_{0} (2|c|^2)+O (\delta )  \right]. 
\end{align*}
This completes the proof of the lemma.
\end{proof}


\end{document}